\title{Logic and Computation through the Lens of Semirings}
\author{Timon Barlag\inst{1}\orcidlink{0000-0001-6139-5219} \and
Nicolas Fröhlich\inst{1}\orcidlink{0000-0001-6139-5219} \and
Teemu Hankala\inst{2}\orcidlink{0009-0009-5535-5718} \and 
Miika Hannula\inst{2,3}\orcidlink{0000-0002-9637-6664} \and
Minna Hirvonen\inst{2}\orcidlink{0000-0002-2701-9620} \and
Vivian Holzapfel\inst{1}\orcidlink{0009-0001-1439-5037} \and
Juha Kontinen\inst{2}\orcidlink{0000-0003-0115-5154} \and
Arne Meier\inst{1}\orcidlink{0000-0002-8061-5376} \and
Laura Strieker\inst{1}\orcidlink{0009-0005-4878-4953}} 
\institute{Leibniz University Hannover, Germany\\
\email{\{barlag,nicolas.froehlich,holzapfel,meier,strieker\}@thi.uni-hannover.de}\\
\and
University of Helsinki, Finland\\
\email{\{teemu.hankala,minna.hirvonen,juha.kontinen\}@helsinki.fi}
\and
University of Tartu, Estonia\\
\email{miika.hannula@ut.ee}}
\date{May 2025}
\authorrunning{T. Barlag et al.}
\newenvironment{restate}[1][\unskip]{
    \begingroup
    
}
{
    \addtocounter{theorem}{-1}
    \endgroup
}
\newcommand{\arb}{\mathrm{Arb}_K}
\newcommand{\arbF}{\mathrm{Arb}_K}
\newcommand{\enc}{\textit{enc}}
\newcommand{\lit}{\textit{Lit}}
\newcommand{\logicFont}[1]{\protect\ensuremath{\mathrm{#1}}\xspace}
\newcommand{\FOK}{\ensuremath{{\logicFont{FO}}_{K}}\xspace}
\newcommand{\FO}{\ensuremath{{\logicFont{FO}}}\xspace}
\newcommand{\R}{\mathbb{R}}
\newcommand{\N}{\mathbb{N}}
\newcommand{\Z}{\mathbb{Z}}
\newcommand{\ol}{\overline}
\newcommand{\bO}{\mathcal{O}}
\newcommand{\ar}{\ensuremath{\textnormal{ar}}\xspace}
\newcommand{\ddfn}{\Coloneqq}
\newcommand{\BSSK}{\mathrm{BSS}_K}
\newcommand{\PK}{\mathrm{P}_K}
\newcommand{\calC}{\mathcal{C}}
\newcommand{\FACO}{\mathrm{FAC}^0_{K}}
\newcommand{\size}{{\mathrm{size}}}
\newcommand{\depth}{{\mathrm{depth}}}
\newcommand{\decproblemdef}[3]{%
\begin{center}
\begin{tabular}{lp{10cm}}\toprule
\textsf{\bfseries Problem:}& #1 \\\midrule
\textsf{\bfseries Input:}& #2\\
\textsf{\bfseries Question:}& #3?\\\bottomrule
\end{tabular}
\end{center}
}
\newcommand{\fv}{\mathrm{FV}}
\newcommand{\Var}{\mathrm{Var}}
\newcommand{\funcproblemdef}[3]{%
\begin{center}
\begin{tabular}{lp{10cm}}\toprule
\textsf{\bfseries Problem:}& #1 \\\midrule
\textsf{\bfseries Input:}& #2\\
\textsf{\bfseries Output:}& #3\\\bottomrule
\end{tabular}
\end{center}
}
\newcommand{\EVAL}{\text-\mathrm{EVAL}}
\newcommand{\MC}{\text-\mathrm{MC}}
\newcommand{\FOKMC}[2][O]{\FOK(#1)\MC\def\temp{#2}\ifx\temp\empty\else_#2\fi}
\newcommand{\evaluate}[2]{\protect\ensuremath{\llbracket#1\rrbracket_{#2}}}
\newcommand{\procEval}{{\normalfont\texttt{Eval}}}
\newcommand{\PSPACEK}{\mathsf{PSPACE}_K}
\newcommand{\ourpar}[1]{%
  \par\vspace{\baselineskip}%
  \noindent\textbf{\sffamily #1. }%
}
\begin{document}

\maketitle

\begin{abstract}
We study the expressivity and computational aspects of first-order logic and its extensions in the semiring semantics developed by Gr\"adel and Tannen. 
We characterize the complexity of model checking and data complexity of first-order logic both in terms of a generalization of Blum-Shub-Smale machines and arithmetic circuits defined over a semiring. 
In particular, we give a logical characterization of constant-depth arithmetic circuits by an extension of first-order logic that holds for any semiring that is both commutative and positive. 
\keywords{Semiring \and Provenance \and FO \and BSS Machines \and Turing Machines \and Computational Complexity \and Circuit Complexity}
\end{abstract}

\section{Introduction}
In the last decade, the use of semirings to study provenance for database query languages has attracted increasing amounts of attention~\cite{gradel17semiringprovenancefirstordermodel,DannertGNT21,DBLP:conf/lics/GradelHNW22,abs-1712-01980,GreenKT07}.
In this article, we study the expressivity and computational aspects of first-order logic in the semiring semantics originating from these studies of provenance in databases.

Semirings are algebraic structures that generalize rings by relaxing the requirement for additive inverses. 
They have found numerous applications in computer science and AI due to their versatility and modularity in modelling and analyzing computational problems~\cite{DBLP:journals/fuin/RudeanuV04,DBLP:journals/tcs/Peeva91,DBLP:journals/ijac/GaubertK06,DBLP:journals/soco/LitvinovRSS13,DBLP:journals/cj/Moller13,DBLP:journals/ijfcs/EsparzaLS15,DBLP:journals/order/Vrana22,EiterK23}.
In other words, specific semirings correspond to different computational paradigms or problem domains. 
Most of the classical complexity theory lives in the domain of the Boolean semiring $\mathbb{B}$, whereas the semiring of natural numbers $\mathbb{N}$ is the domain of problems related to counting, and the probability semiring $\mathbb{R}_{\geq 0}$ for problems with geometric or continuous features. 
On the other hand, so-called tropical semirings have various applications in performance analysis~\cite{DBLP:journals/access/OmanovicOC23} and reachability problems~\cite{DBLP:journals/ijac/GaubertK06}.

Several computation models can be generalized to various classes of semirings. 
For example, weighted automata and weighted Turing machines label the transitions of the machine with semiring elements representing quantities such as probabilities, costs, or capacities. 
Different semirings enable automata to model varied quantitative behaviours with a wide array of applications (see, e.g., \cite{KOSTO,badia}). 
Furthermore, circuit complexity theory and algebraic algorithms readily generalize to various families of semirings (see \cite{ganardi} and the references therein).    

Semirings have also found applications in databases. Semiring provenance is an approach to query evaluation in which the result of a query is something more than just a mere one-bit true/false answer. 
The basic idea behind this approach is to annotate the atomic facts in a database by values from some semiring $K$, and to propagate these values through a query.
Depending on the choice of the semiring, the provenance valuation gives information about a query, e.g., regarding its confidence, cost, or the number of assignments that make the query true \cite{GreenKT07}.
Semiring semantics for query languages is currently an actively studied topic in database theory (see, e.g.,
 \cite{10.1145/3651146,im_et_al:LIPIcs.ICDT.2024.11} for Datalog queries and  \cite{eldar_et_al:LIPIcs.ICDT.2024.4,munozserrano_et_al:LIPIcs.ICDT.2024.12,tencate_et_al:LIPIcs.ICDT.2024.8} for conjunctive queries). 
  
Semiring semantics has also been defined for first-order logic \cite{abs-1712-01980,Tannen17}. In this context, it has particularly been explored via key themes in classical finite model theory, including Ehrenfeucht--Fra\"{i}ss\'{e} games, locality, 0-1 laws, and definability up to isomorphisms \cite{BiziereGN23,BrinkeGM24,GradelHNW22,GradelM21}.
Recently, semiring semantics has been further extended to more expressive logical languages such as fixed-point logic \cite{DannertGNT21} and team-based logics \cite{BarlagHKPV23}.  
It is worth noting that the logics in these works differ from the logics that have been defined and studied in the context of weighted automata and Turing machines \cite{KOSTO,badia}. 
In fact, a natural
computation model for our purposes is not a weighted TM but a generalization of the Blum-Shub-Smale machine (BSS machine) that we define in this article. The inputs of such a machine are finite sequences of the elements of the semiring $K$ whereas the weighted machines operate with classical Boolean inputs.
\ourpar{Contributions}
In this paper,  we take the first steps towards connecting logics in the semiring semantics with computations defined over semirings. As mentioned above, the extensively studied framework of weighted machines and logics is not applicable to our context.  
We consider novel extensions of first-order logic by so-called formula inequalities and built-in $K$-relations in the semiring semantics. The former extension turns out to be an essential feature of the logics to be able to characterize non-trivial computations.
In order to characterize the complexity of these logics, we generalize the well-known BSS machines~\cite{BSSbook} to arbitrary semirings. We also define another polynomially equivalent machine model, the $K$-Turing machine, which facilitates the proofs and may be of independent interest.  
In the second part of the article, we consider arithmetic circuits over a semiring $K$ and give a logical characterisation of the function class (non-uniform) $\FACO$ by an extension of first-order logic that is true for any semiring~$K$ that is commutative and positive. 

\ourpar{Organisation}
In Section~\ref{BSS}, we generalize the BSS model from the reals to a wide variety of  semirings, define the $K$-Turing machine model,  and show how classical computations can be simulated on such  machines. 
In Section~\ref{Circuits}, we go through the basic definitions regarding arithmetic circuits over a semiring. 
In Section~\ref{logic}, we define the semiring interpretation of first-order formulas. In Section~\ref{results}, 
we characterize the complexity of model checking and the data complexity of first-order logic over a semiring $K$ in terms of a generalization of BSS machines and in Section~\ref{circ} we give a characterization via arithmetic circuits defined over~$K$.


\section{Preliminaries}
We assume familiarity with basic concepts in theoretical computer science, e.g., Turing machines~\cite{DBLP:books/daglib/0086373}. 
We start with the fundamental definition of a semiring.
\begin{definition}
    A \emph{semiring} is a tuple $K=(K,+,\cdot,0,1)$, where $+$ and $\cdot$ are binary operations on $K$, $(K,+,0)$ is a commutative monoid with identity element $0$, $(K,\cdot ,1)$ is a monoid with identity element $1$, $\cdot$ left and right distributes over $+$, and $x \cdot 0 =0= 0\cdot x$ for all $x \in K$.
    $K$ is called \emph{commutative} if $(K,\cdot ,1)$ is a commutative monoid.
    As usual, we often write $ab$ instead of $a \cdot b$. 
    A semiring is \emph{ordered} if there exists a  
    partial order $\leq$ (meaning $\leq$ is reflexive, transitive, and antisymmetric) such that for all $a,b, c \in K$ $a\leq b \implies a+c \leq b+c$ and $a\leq b \land 0 \leq c \implies ac \leq bc \land ca \leq cb$.
    A semiring is \emph{positive} if it has no divisors of $0$, i.\,e., $ab\neq 0$ for all $a,b\in K$, where $a\neq 0 \neq b$ and if $a+b=0$ implies that $a=b=0$.
\end{definition}
The \emph{natural order} $\leq_{\rm n}$ of a semiring $K$ is  defined by $a\leq_{\rm n} b \iff \exists c: a+c=b$. 
 The natural order of a semiring is always a preorder (meaning $\leq$ is reflexive and transitive). Moreover, if $\leq_{\rm n}$ is antisymmetric, then the semiring is ordered by it.

Important examples of semirings are the \emph{Boolean semiring} $\mathbb{B}=(\mathbb{B},\lor,\land,0,1)$ as the simplest example of a semiring that is not a ring, the \emph{probability semiring} $\mathbb{R}_{\geq 0}=(\mathbb{R}_{\geq 0},+,\cdot,0,1)$,
and the \emph{semiring of natural numbers} $\mathbb{N}=(\mathbb{N},+,\cdot,0,1)$. 
Further examples include
the \emph{tropical semiring} $\mathbb{T} = (\mathbb{R}\cup\{\infty\}, \min, +, \infty, 0)$,
the \emph{semiring of multivariate polynomials} $\mathbb{N}[X]=(\mathbb{N}[X],+,\cdot,0,1)$,
and the \emph{\L{}ukasiewicz semiring} $\mathbb{L} = ([0,1], \max, \cdot , 0, 1)$, used in multivalued logic.

\begin{remark}
Throughout this paper, we consider semirings that are commutative, positive, and nontrivial (i.e., $0 \neq 1$). These assumptions are generally assumed for simplicity, even though they are not universally needed throughout the paper.
\end{remark}


\subsection{\texorpdfstring{BSS$_K$}{BSSK} Machines}\label{BSS}
BSS machines were introduced as a model for real computation by Blum, Shub and Smale \cite{blum1989} and were originally defined over rings.
The computation nodes of a BSS machine are usually formulated in terms of quotients of polynomial functions with real coefficients. Since semirings generally lack additive and multiplicative inverses, $\BSSK$ machines have no operations
for subtraction and division.

\begin{definition}
    For a set $A$, we let $A^* = \bigcup_{n\geq 0} A^n$ and $A^+ = \bigcup_{n >0} A^n$.
    By $A_*$ we denote the set $A^\Z$ of all $x= (\dots,x_{-2}, x_{-1}, x_0 \textbf{.} x_1, x_2, \dots)$, where \textbf{.} marks the first position.
    We define two shift operations on $A_*$: shift left $\sigma_l$, where $\sigma_l(x)_i = x_{i+1}$, and its inverse, shift right $\sigma_r$, where $\sigma_r(x)_i = x_{i-1}$, for all $i \in \Z$.
\end{definition}

\begin{definition}[$\BSSK$ machines] \label{def:shorterBSSK}
    Let $K$ be a semiring.    
    A $\BSSK$ machine $M$
    computes a (possibly partial) \emph{input-output} function $f_M \colon K^*\to K^*$ represented by a directed graph of nodes labelled by $1, \ldots, N$. These nodes are of five different types.
    A configuration of the machine $M$ at any moment of computation consists of a node $m\in \{1, \ldots ,N\}$ and a current state $x \in K_*$, proceeding as follows.
    
        The node labelled by $1$ is the unique \emph{input node} and is associated with a next node $\beta(1)$.
        Here, the input $(x_1, \ldots ,x_n) \in K^*$ of the machine is initialized as
        \[
            (\ldots , 0, \underbrace{1, \dots, 1}_{n}, 0\textbf{.} x_1, \ldots ,x_n, 0,\ldots )\in K_*.
        \]
        
        Each \emph{computation node} $m$ is associated with a next node $\beta(m)$ and a mapping $g_m\colon K_* \to K_*$ such that $g_m(x)_l=x_l$, for $l \neq i$,
        and $g_m(x)_i$ is one of $x_j+x_k$, $x_j\cdot x_k$, or $c$, for some $c \in K$ and $i,j,k\in \mathbb{Z}$.
        
        Every \emph{branch node} $m$ is associated with nodes $\beta^-(m)$ and $\beta^+(m)$.
        Given $x\in K_*$ the next node is $\beta^-(m)$ if $x_1 = x_2$ and $\beta^+(m)$, otherwise.
        If $K$ is ordered, instead the next node is $\beta^-(m)$ if $x_1 \leq x_2$ and $\beta^+(m)$, otherwise.
        
        A \emph{shift node} $m$ is associated either with $\sigma_l$ or $\sigma_r$, and a next node $\beta(m)$.
        
        The node labelled by $N$ is the unique \emph{output node}.
        Once this node is reached, the computation halts resulting in the string consisting of the first $l$ positive coordinates, where $l$ is the number of consecutive $1$s stored in the negative coordinates starting from the first negative coordinate.
    %
\end{definition}

\begin{definition}\label{def:timespace}
    A function $f \colon K^*\to K^*$ is \emph{computable} if $f=f_M$ for some machine $M$. A language $L \subseteq K^*$ is \emph{decided} by a $\BSSK$ machine $M$ if its characteristic function $\chi_L$ is $f_M$.
    A machine $M$ \emph{runs in time} $t\colon \N \rightarrow \N$ (\emph{requires space} $s\colon \N \rightarrow \N$), if $M$ reaches the output in $t(|x|)$ steps (using $s(|x|)$ coordinates of the state space) for each input $x$.
    The machine $M$ runs in \emph{polynomial time} (requires \emph{polynomial space}) if $t$ ($s$) is a polynomial function.
    The complexity class $\PK$ ($\PSPACEK$) is defined as the set of all subsets of $K^*$ that are decided by some machine $M$ running in polynomial time (requiring polynomial space).
\end{definition}

Next we define a hybrid of a $\BSSK$ machine and a Turing machine that is technically more convenient for the upcoming proofs.
In addition to elements of the semiring, the machine operates on a finite alphabet
of tape symbols. Instead of filling the tape with $0$s, a blank symbol is used. In this manner,
the lengths of the input and output strings need not be separately encoded.
The machine is also equipped with a finite number of internal registers for elements of the semiring.

\begin{definition}[$K$-TMs]\label{def:KTM}
Let $K$ be a semiring. A \emph{$K$-Turing machine}
is a tuple $M = (Q, q_0, R, P, \Gamma, b, \Sigma, \delta)$ with the following interpretations.
\begin{itemize}
\item $Q$ is a finite set of states, with $q_0 \in Q$ being the initial state.
\item $R$ is a finite set of $K$-valued registers, which are always initialized to $0$.
\item $P \colon Q \to \{=, \leq\} \times R$ is a branch predicate function.
\item $\Gamma$ is a finite set of tape symbols, with a blank symbol $b \in \Gamma$,
so that $\Gamma \cap K = \varnothing$.
\item $\Sigma \subseteq (\Gamma \setminus \{b\}) \cup K$ is the set of input symbols.
\item The type of each transition of the machine $M$ is given by a partial function
$
\delta \colon Q \times (\Gamma \cup \{\top, \bot\})
\to Q \times W
\times \mathcal{P}(R)
\times \{\sigma_l, \operatorname{id}, \sigma_r\}
$,
where
$W$ is the set $\Gamma \cup K \cup \{\operatorname{id}\} \cup (\{+, \cdot\} \times R)$.
Undefined cases represent halting conditions.
\end{itemize}
A configuration of the machine $M$ at any moment of computation consists of a state $q\in Q$
and the contents of the tape in the form $x \in (\Gamma \cup K)_*$.
An input $(x_1, \dots, x_n) \in \Sigma^*$ is initialized as
$(\ldots , b\textbf{.}\, x_1, \ldots ,x_n, b,\ldots) \in (\Gamma \cup K)_*$.

The transitions of $M$ are defined in the following manner.
Let $q$ be the current state and let $a$ be the current tape symbol.
If $a \in \Gamma$, the type of the transition is given by $\delta(q, a)$. Otherwise, $a\in K$,
and $\delta$ is applied to the pair $(q, v)$
with $v \in \{\top, \bot\}$ being the truth value of the comparison
$a \circ y$, where $P(q) = (\circ, r)$ and $y$ is the current value in the register $r$.
If $\circ$ is $\leq$ and $K$ is not ordered, the machine halts.
Then, the value $(q', z, s, \sigma)$ given by $\delta$ is applied as follows.

The next state of the machine is $q'$.
The element $z$ determines the symbol to be written on the tape under the head of the machine
so that, if $z \in \Gamma \cup K$,
the symbol $z$ is written, and if $z = \operatorname{id}$, the symbol $a$ is not changed,
and if both $z = (\star, r)$ and $a \in K$ are true, then the value of the operation
$a \star v$ is written on the tape, where $v$ is the current value of the register $r$.
In other cases $M$ halts.

If $a \in K$, the values of the registers in the set $s \subseteq R$ are set to $a$,
and in the case that both $a \notin K$ and $s \neq \varnothing$ are true, the machine halts.
The tape is shifted according to the shift operation $\sigma \in \{\sigma_l, \sigma_r\}$
or kept in place if $\sigma = \operatorname{id}$.

Once the machine halts, the result of the computation is the string consisting of the first
positive coordinates until the first blank symbol with a positive coordinate.
\end{definition}

We define the use of time and space for a $K$-Turing machine as in Definition~\ref{def:timespace}.
The computation of a $K$-TM can be simulated using a $\BSSK$ machine. This result is proved in Appendix~\ref{proof:turing}.
Simulating a $\BSSK$ using a $K$-TM is straightforward.

\begin{theorem}\label{lemma:turing}
Let $K$ be a semiring and
let $f \colon K^* \to K^*$ be a function
computed by a $K$-Turing machine $M$
that runs in time $t$ and space $s$.
Then there is a $\BSSK$ machine $M'$ computing $f$ and
$c \in \N$ such that
for each $x \in K^*$, the machine $M'$ runs on input $x$ in time $c(t(|x|) + |x|^2 + |f(x)|^2 + 1)$
and in space $c(s(|x|) + 1)$.
\end{theorem}

%

\subsection{Arithmetic Circuits}\label{Circuits} 

In
this
section we will briefly introduce relevant concepts for arithmetic circuits over semirings.
A more rigorous treatment thereof is presented in Appendix~\ref{proof:thm_fo_aco}.

\begin{definition}
    \label{def_arith_circ}
    Let $K$ be a 
    semiring.
    An \emph{arithmetic circuit} $C$ over $K$ is a connected, directed acyclic graph. 
    Its nodes (also called \emph{gates}) are each labelled as either \emph{input}, \emph{$+$}, \emph{$\cdot$}, \emph{constant} $c$ for $c \in K$,  or \emph{output}, where input and constant gates have fan-in $0$ and output gates have fan-in $1$ and fan-out $0$.
\end{definition}
The function $f_C$ computed by $C$ is defined in the obvious way.

We call the number of gates in a circuit $C$ the \emph{size} of $C$ and the longest path from an input gate to an output gate the \emph{depth} of $C$.
We also write $\size(C)$ and $\depth(C)$ to denote the respective values.

A single circuit can only compute a function with a fixed number of arguments, which is why we call arithmetic circuits a \emph{non-uniform} model of computation. 
In order to talk about arbitrary functions, we need to consider circuit families, i.e., sequences of circuits which contain one circuit for every input length $n \in \N$. 

\begin{definition}
    Let $K$ be a 
    semiring.
    A $K$-\emph{circuit family} is a sequence of $K$-circuits $\mathcal{C} = (C_n)_{n \in \N}$.
    The function computed by a circuit family $\mathcal{C} = (C_n)_{n \in \N}$ is the function computed by the respective circuit, i.e., for all $x \in K^+$
    \begin{equation}
        f_\mathcal{C}(x) = f_{C_{\lvert x\rvert}}(x).
    \end{equation} 
\end{definition}

For a function $f \colon \N \to \N$, we say that a circuit family $\mathcal{C}$ is of size $f$ (depth $f$) if the size (depth) of $C_n$ is bounded by $f(n)$ for all $n \in \N$.
We also write $\size(\mathcal{C})$ and $\depth(\mathcal{C})$ to denote the respective values.

To properly consider a circuit family $\mathcal{C}$ as an algorithm, some bounds for the computational complexity of obtaining the $n$th circuit when given $n$ are required.
This would make them a \emph{uniform} model of computation. However, questions of uniformity lie outside the scope of this paper. Accordingly, we restrict our attention to \emph{non-uniform} circuit families, though we briefly return to the issue in the conclusion.



If we want the circuits to mimic branching behaviour akin to $\BSSK$ machines, we need additional control structures.
For this reason, we add binary relations to branch on in the following way.

\begin{definition}
    Let $K$ be a 
    semiring, let $R_1,\dots ,R_l \subseteq K^2$ be relations and let $\chi_{R_i} \colon K^2 \to \{0, 1\}$ be the characteristic function of each $R_i$.
    Then the class $\FACO[R_1,\dots,R_l]$ consists of all functions $f \colon K^* \to K$ computable by $K$-circuit families $\calC$ such that $\size(\calC) \in \bO(n^{\bO(1)})$ and $\depth(\calC) \in \bO(1)$, where the circuits in $\calC$ have an additional binary gate types that compute the functions $\chi_{R_i}$.
\end{definition}
If we do not have any relations $R_i$, we write $\FACO$. 
The circuits of this class then compute only polynomials.

Of particular interest for us will be the class $\FACO[O]$ for $O \subseteq \{=, \neq, \leq, \not\leq\}$.
(Whenever we extend a logic or a computational model with an $O$, where $\leq\ \in O$ or $\not\leq\ \in O$, we assume the underlying semiring to be ordered.)
In order to characterize these classes logically, we mention another useful property thereof:
any function in $\FACO$ or the aforementioned extensions can be computed by circuits that are essentially trees.
The following lemma is a minor modification of a result about circuit over the reals~\cite[Lemma~26]{10.1093/logcom/exae051}.

\begin{lemma}\label{lem:circ_trees}
    Let $O \subseteq \{=, \neq, \leq, \not\leq\}$, let $K$ be a 
    semiring and let $f \colon K^* \to K \in \FACO[O]$.
    Then there exists a family of circuits $\calC = (C_n)_{n \in \N}$ of size $\bO(n^{\bO(1)})$ and depth $\bO(1)$ computing $f$ such that for each circuit $C \in \calC$, all non-input gates have fan-out $1$ and for each gate $g$ in $C$, each input-$g$ path has the same length.
\end{lemma}

\subsection{First-Order Logic under Semiring Semantics}\label{logic} 

In this section, 
we define the semiring semantics for  first-order formulas 
and so-called formula inequality atoms (cf. \cite{gradel17semiringprovenancefirstordermodel,BarlagHKPV23}). 
%

A vocabulary $\tau$ is a set of relation symbols. 
We denote by $\ar(R)$ the \textit{arity} of a relation  symbol $R$.
The set $\Var$ is the set of first-order variables.  



\begin{definition}[First-order logic with (in)equality]
    The syntax for the first-order logic (of a  vocabulary $\tau$) with formula (in)equality 
    for a set $O\subseteq\{{=,}{\neq,}{\leq,}{\not\leq}\}$, denoted by $\FO(O)$, is defined as follows:
     \[
    \phi\ddfn x = y \mid  x \neq y \mid R(\bar{x})\mid \neg R(\bar{x}) \mid \phi \circ \phi \mid(\phi\wedge\phi) \mid (\phi\lor\phi) \mid \exists x\phi \mid \forall x\phi,
    \]
     where $x,y\in\Var$, $R\in\tau$, $\bar{x}$ is a tuple of variables so that $\ar(R)=|\bar{x}|$, and $\circ\in O$. 
\end{definition}
If $O=\emptyset$, we write $\FO$ instead of $\FO(\emptyset)$. 
The definition of a set of \emph{free variables} $\fv(\phi)$ of a formula $\phi \in \FO(O)$ extends from $\FO$ in the obvious way: $\fv(\phi\circ \psi)=\fv(\phi)\cup\fv(\psi)$, for $\circ\in O$. If $\phi$ contains no free variables, i.e., $\fv(\phi)=\emptyset$, it is called a \emph{sentence}.

Let $A$ be a finite set (i.e., the model domain), $\tau=\{R_1,\dots,R_n\}$ a finite vocabulary, and $K$ a semiring. Define $Lit_{A,\tau}$ as the set of \textit{literals} over $A$, i.e., the set of facts and negated facts $Lit_{A,\tau}=\{R(\bar{a})\mid \bar{a}\in A^{\ar(R)},R\in\tau\}\cup\{\neg R(\bar{a})\mid \bar{a}\in A^{\ar(R)},R\in\tau\}$. A $K$-interpretation is a mapping $\pi\colon Lit_{A,\tau}\to K$. An assignment $s$ is a mapping $s\colon\Var\to A$ that associates each first-order variable with an element from the set $A$.




     
\begin{definition}\label{interpretation}
    Let $K=(K,+,\cdot,0,1)$ be a 
      semiring, and $\pi\colon Lit_{A,\tau}\to K$ a $K$-interpretation. Let $O\subseteq \{=,\neq,\leq,\not\leq\}$. If $\leq\ \in O$ or $\not\leq\ \in O$, we additionally assume that $K$ is ordered. The $K$-interpretation of a $\FO(O)$-formula $\theta$ under $\pi$ and an assignment $s$  is denoted by $\evaluate{\theta}{\pi,s}$ and defined as follows: 
    \begin{align*}
        \evaluate{x\star y}{\pi,s} &=
        \begin{cases}
        1 \quad \text{if } s(x)\star s(y),\\
        0 \quad \text{otherwise}, 
        \end{cases}
        \text{, where } \star\in\{=,\neq\}, \hspace{-8cm}\\
        \evaluate{R(\bar{x})}{\pi,s} &= \pi(R(s(\bar{x}))), &
        \evaluate{\neg R(\bar{x})}{\pi,s} &= \pi(\neg R(s(\bar{x}))),\\
        \evaluate{\phi\wedge\psi}{\pi,s} &= \evaluate{\phi}{\pi,s}\cdot\evaluate{\psi}{\pi,s}, &
        \evaluate{\phi\lor\psi}{\pi,s} &= \evaluate{\phi}{\pi,s}+\evaluate{\psi}{\pi,s}, \\
        \evaluate{\exists x\phi}{\pi,s} &= \sum_{a\in A}\evaluate{\phi}{\pi,s(a/x)}, & \evaluate{\forall x\phi}{\pi,s} &= \prod_{a\in A}\evaluate{\phi}{\pi,s(a/x)}, \\
        \evaluate{\phi\circ\psi}{\pi,s} &=
        \begin{cases}
        1 \quad \text{if } \evaluate{\phi}{\pi,s}\circ\evaluate{\psi}{\pi,s},\\
        0 \quad \text{otherwise}, 
        \end{cases}
        \text{, where } \circ\in\{=,\neq,\leq,\not\leq\}. \hspace{-4cm}
    \end{align*}
    If $\phi$ is a \emph{sentence}, we write $\evaluate{\phi}{\pi}$ for $\evaluate{\phi}{\pi,\emptyset}$.
\end{definition}
 We say that a $K$-interpretation $\pi$  is \textit{model-defining} \cite{gradel17semiringprovenancefirstordermodel}, if for all $R(\bar{{a}})$, we have $\pi(R(\bar{a}))=0$ iff $\pi(\neg R(\bar{a}))\neq 0$. \label{par:model_defining}
 
Sentences $\phi$ and $\psi$ of $\FO(O)$ are $K$-equivalent, written as $\phi\equiv_K \psi$, if $\evaluate{\phi}{\pi}=\evaluate{\psi}{\pi}$ for all 
$K$-interpretations. For logics $\mathcal{L}$ and $\mathcal{L}'$, we write $\mathcal{L}\leq_K\mathcal{L}'$ if for each sentence $\phi$ from $\mathcal{L}$ there is a sentence $\psi$ from  $\mathcal{L}'$ such that $\phi\equiv_K \psi$. If $\mathcal{L}\leq_K\mathcal{L}'$ and $\mathcal{L}'\leq_K\mathcal{L}$, we write $\mathcal{L}\equiv_K\mathcal{L}'$, and say that $\mathcal{L}$ and $\mathcal{L}'$ are \textit{equally expressive under} $K$. As usual, we write $\mathcal{L}<_K\mathcal{L}'$ if $\mathcal{L}\leq_K\mathcal{L}'$ and $\mathcal{L'}\not\leq_K\mathcal{L}$.

\begin{figure}
\centering
\begin{tikzpicture}[every node/.style={draw, rectangle, rounded corners, inner sep=1mm},x=2cm]
    \node (OSL) at (0, .5) {\footnotesize OSL};
    \node (GOT) at (1, 0) {\footnotesize GOT};
    \node (KI) at (-.2, -1.2) {\footnotesize KI};
    \node (FKH) at (-.8, -.5) {\footnotesize FKH};
    
    \draw (OSL) -- (KI) node[near end, left=0mm, draw=none, fill=none] {20};
    \draw (OSL) -- (FKH) node[midway, above=1mm, draw=none, fill=none] {10};
    \draw (GOT) -- (FKH) node[near start, above=0mm, draw=none, fill=none] {4};
    \draw (GOT) -- (KI) node[midway, below=1mm, draw=none, fill=none] {14};
    \draw (FKH) -- (KI) node[near start, below=1mm, draw=none, fill=none] {12};

\end{tikzpicture}
\caption{Ferry network as a $K$-interpretation. \label{fig:air}}
\end{figure}

\begin{example}\label{ex:ferry}
    Suppose $A$ consists of the city codes OSL (Oslo), GOT (Göteborg), KI (Kiel), FKH (Frederikshavn), and $\tau$ of one edge relation $E$. Consider the tropical semiring $\mathbb{T}=(\mathbb{R}\cup\{\infty\}, \min, +, \infty, 0)$.
    An example $\mathbb{T}$-interpretation $\pi\colon Lit_{A,\tau}\to \mathbb{T}$ is obtained by assigning $E(x,y)$, for each pair of cities $x$ and $y$, a number representing the duration of a direct ferry ride between them, as in \Cref{fig:air};
    if no direct ferry connection exists between two cities $x$ and $y$, $E(x,y)$ is assigned $\infty$. Furthermore each negated fact $\neg E(x,y)$ can be assigned a number so that $\pi$ is model-defining.
    Then, we can express in $\FO[<]$ that the duration of a direct ferry trip between any two cities $x$ and $y$ is shorter than the sum of durations for ferries from $x$ to  $z$ and $z$ to $y$, for any city $z$:
    \begin{equation}\label{eq:phi}
    \phi \coloneqq \forall xyz( E(x,y) <  (E(x,z) \land E(z,y))).
        \end{equation}

    Clearly, $\evaluate{\phi}{\pi}=0$, that is, $\phi$ evaluates to the identity element of multiplication under $\pi$.
\end{example}

We now consider some results that demonstrate the expressivity of the logics in the special case of the Boolean semiring $\mathbb{B}=(\mathbb{B},\lor,\wedge,0,1)$. Let $\mathcal{A}$ be a structure of vocabulary $\tau$.
The canonical truth interpretation $\pi_{\mathcal{A}}$, is the $\mathbb{B}$-interpretation $\pi\colon\lit_{A,\tau}\to \mathbb{B}$ defined for each $R(\bar{a})$ and $\neg R(\bar{a})$ as follows: $\pi(R(\bar{a}))=1$ if $\bar{a}\in R^{\mathcal{A}}$, $\pi(R(\bar{a}))=0$ if $\bar{a}\not\in R^{\mathcal{A}}$,
and $\pi(\neg R(\bar{a}))=1-\pi(R(\bar{a}))$.
The following proposition shows that for $\FO$-formulas, the canonical truth interpretation $\pi_{\mathcal{A}}$ under an assignment $s$ corresponds to the usual first-order formula evaluation under $s$ in the structure $\mathcal{A}$. 
It can be proven by induction on $\alpha$.
\begin{proposition}[\cite{gradel17semiringprovenancefirstordermodel}]
   Let $\alpha$ be an $\FO$-formula, and $\mathcal{A}$ a structure. Then  $\evaluate{\alpha}{\pi_{\mathcal{A}},s}=1$ if and only if $\mathcal{A}\models_s\alpha$.
\end{proposition}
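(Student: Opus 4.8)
The plan is to argue by induction on the structure of the $\FO$-formula $\alpha$. The two facts that drive the argument are that the canonical truth interpretation $\pi_{\mathcal{A}}$ takes values only in $\{0,1\}\subseteq\mathbb{B}$, and that in the Boolean semiring the sum $+$ and product $\cdot$ coincide with the connectives $\lor$ and $\land$, so that the finite sums $\sum_{a\in A}$ and products $\prod_{a\in A}$ appearing in the clauses for $\exists$ and $\forall$ are exactly the disjunction $\bigvee_{a\in A}$ and the conjunction $\bigwedge_{a\in A}$ over the (finite) domain $A$. Since every semiring value involved lies in $\{0,1\}$, proving $\evaluate{\alpha}{\pi_{\mathcal{A}},s}=1$ iff $\mathcal{A}\models_s\alpha$ amounts to determining exactly when that value is $1$.

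For the base cases, the equality atoms $x=y$ and $x\neq y$ are immediate from the definition of $\evaluate{\cdot}{\pi,s}$, which returns $1$ precisely when the stated relation holds between $s(x)$ and $s(y)$. For $\alpha=R(\bar x)$ we have $\evaluate{R(\bar x)}{\pi_{\mathcal{A}},s}=\pi_{\mathcal{A}}(R(s(\bar x)))$, which by the definition of $\pi_{\mathcal{A}}$ equals $1$ iff $s(\bar x)\in R^{\mathcal{A}}$, i.e.\ iff $\mathcal{A}\models_s R(\bar x)$; for $\alpha=\neg R(\bar x)$ the value is $\pi_{\mathcal{A}}(\neg R(s(\bar x)))=1-\pi_{\mathcal{A}}(R(s(\bar x)))$, which is $1$ iff $s(\bar x)\notin R^{\mathcal{A}}$. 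Note that these are the only atomic cases: since $\FO$-formulas are in negation normal form, there is no separate induction step for a general negated subformula.

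For the inductive step, assume the claim for $\phi$ and $\psi$. The case $\alpha=\phi\land\psi$ follows because $\evaluate{\phi}{\pi_{\mathcal{A}},s}\cdot\evaluate{\psi}{\pi_{\mathcal{A}},s}=1$ in $\mathbb{B}$ iff both factors are $1$, hence by the induction hypothesis iff $\mathcal{A}\models_s\phi$ and $\mathcal{A}\models_s\psi$; the case $\alpha=\phi\lor\psi$ is symmetric, using that a Boolean sum is $1$ iff some summand is. For $\alpha=\exists x\phi$, the finite Boolean sum $\sum_{a\in A}\evaluate{\phi}{\pi_{\mathcal{A}},s(a/x)}$ is $1$ iff some summand is, which by the induction hypothesis applied to $\phi$ under $s(a/x)$ holds iff $\mathcal{A}\models_{s(a/x)}\phi$ for some $a\in A$, i.e.\ iff $\mathcal{A}\models_s\exists x\phi$; dually, for $\forall x\phi$ the finite Boolean product is $1$ iff every factor is. I do not expect any genuine obstacle: the only points needing a word of care are that negation is confined to atoms (so the NNF syntax yields no case for an arbitrary $\neg\phi$) and that $A$ being finite is what makes the quantifier clauses honest finite sums and products in $\mathbb{B}$, matching the Tarskian semantics.
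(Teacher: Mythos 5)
Your proof is correct and follows exactly the route the paper indicates (the paper only remarks that the proposition ``can be proven by induction on $\alpha$'' and omits the details): a structural induction using that $\pi_{\mathcal{A}}$ is $\{0,1\}$-valued, that $+$ and $\cdot$ in $\mathbb{B}$ are $\lor$ and $\land$, and that negation normal form confines negation to atoms. Your filled-in base cases and inductive steps are all accurate, so there is nothing to add.
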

Let $\xi_K\colon K\to\mathbb{B}$ be the characteristic mapping such that $\xi_K(a)=0$ if $a$ is the zero element of $K$ and $\xi_K(a)=1$ otherwise. Collapsing the interpretations of $\FO$-formulas to the Boolean semiring by taking the composition of the characteristic mapping $\xi_K$ and a $K$-interpretation $\pi$ preserves the 0-valued interpretations.
\begin{proposition}[\cite{BarlagHKPV23}]\label{equivprop}
    Let 
    be $\pi\colon \lit_{A,\tau}\to K$ an interpretation. Then for all $\alpha\in\FO$, $\evaluate{\alpha}{\xi_K\circ\pi}=1$ if and only if $\evaluate{\alpha}{\pi}\neq 0$ .
\end{proposition}


In case of the Boolean semiring $\mathbb{B}$ (with $0<1$), having access to formula (in)equality does not increase expressivity in the sense of the following proposition.
\begin{proposition}[\cite{BarlagHKPV23}]
$\FO\equiv_{\mathbb{B}}\FO(=,\neq,\leq,\not\leq)$
\end{proposition}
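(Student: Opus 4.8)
The plan is to prove the two inclusions $\FO\leqq_\mathbb{B}\FO(=,\neq,\leq,\not\leq)$ and $\FO(=,\neq,\leq,\not\leq)\leqq_\mathbb{B}\FO$ separately. The first is immediate: every $\FO$-sentence already belongs to $\FO(=,\neq,\leq,\not\leq)$ and has the same $\pi$-interpretation there. For the converse I would prove, by induction on the construction of $\phi$, the stronger claim that every $\FO(=,\neq,\leq,\not\leq)$-formula $\phi$ is, under every assignment, $\mathbb{B}$-equivalent to some $\FO$-formula $\psi$ (that is, an ordinary first-order formula in negation normal form) with $\fv(\psi)=\fv(\phi)$; restricting to sentences then gives the proposition. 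The only essentially new case is an atomic comparison $\alpha_1\circ\alpha_2$ with $\circ\in\{=,\neq,\leq,\not\leq\}$.

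The key Boolean input is that over $\mathbb{B}$, for a model-defining interpretation $\pi$ (the relevant case here), every $\FO$-formula $\alpha$ satisfies $\evaluate{\alpha}{\pi,s}\in\{0,1\}$ and $\evaluate{\nnf(\neg\alpha)}{\pi,s}=1-\evaluate{\alpha}{\pi,s}$; this can be seen either by a direct induction on $\alpha$, or by noting that such a $\pi$ is the canonical truth interpretation of the structure $\mathcal{A}$ with $R^{\mathcal{A}}=\{\bar a:\pi(R(\bar a))=1\}$, so that $\evaluate{\alpha}{\pi,s}=1\iff\mathcal{A}\models_s\alpha$ and, since $\nnf(\neg\alpha)\equiv\neg\alpha$ classically, $\evaluate{\nnf(\neg\alpha)}{\pi,s}=1\iff\mathcal{A}\not\models_s\alpha$. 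Using this, I would translate the comparisons by mirroring, on $\{0,1\}$ ordered by $0<1$, the four predicates:
\begin{align*}
\alpha_1=\alpha_2 &\ \rightsquigarrow\ (\alpha_1\land\alpha_2)\lor(\nnf(\neg\alpha_1)\land\nnf(\neg\alpha_2)),\\
\alpha_1\neq\alpha_2 &\ \rightsquigarrow\ (\alpha_1\land\nnf(\neg\alpha_2))\lor(\nnf(\neg\alpha_1)\land\alpha_2),\\
\alpha_1\leq\alpha_2 &\ \rightsquigarrow\ \nnf(\neg\alpha_1)\lor\alpha_2,\\
\alpha_1\not\leq\alpha_2 &\ \rightsquigarrow\ \alpha_1\land\nnf(\neg\alpha_2).
\end{align*}
Each right-hand side is an $\FO$-formula in negation normal form with free variables $\fv(\alpha_1)\cup\fv(\alpha_2)=\fv(\alpha_1\circ\alpha_2)$, and the Boolean complement fact guarantees that its $\pi$-interpretation equals that of the left-hand side for all $\mathbb{B}$-interpretations and assignments.

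The remaining inductive steps are routine. For $\phi$ of the form $\phi_1\land\phi_2$, $\phi_1\lor\phi_2$, $\exists x\,\phi_1$, or $\forall x\,\phi_1$ I would take $\psi$ to be the same combination of the formulas $\psi_1,\psi_2$ supplied by the induction hypothesis; this preserves the $\pi$-interpretation because $\land,\lor,\exists,\forall$ are evaluated by the same semiring operations ($\cdot$, $+$, $\sum$, $\prod$) in $\FO$ and in $\FO(=,\neq,\leq,\not\leq)$, and the class of $\FO$-formulas in negation normal form is closed under these four connectives and already contains each $\nnf(\neg\alpha_i)$. The main obstacle is isolating the Boolean complement fact: it is the one point where the specific semiring $\mathbb{B}$ is used in an essential way, and over other positive semirings such as $\mathbb{N}$ or $\mathbb{R}_{\geq 0}$ there is no $\FO$-definable replacement for $\nnf(\neg\alpha)$, so the comparison connectives genuinely add expressive power there.
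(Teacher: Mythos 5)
Your proposal is correct and follows essentially the same route as the paper: an induction whose only substantive case is the elimination of the comparison connectives via the same four $\mathbb{B}$-equivalences $\phi\leq\psi\equiv_{\mathbb{B}}\nnf(\neg\phi)\lor\psi$, $\phi\not\leq\psi\equiv_{\mathbb{B}}\phi\wedge\nnf(\neg\psi)$, $\phi=\psi\equiv_{\mathbb{B}}(\phi\wedge\psi)\lor(\nnf(\neg\phi)\wedge\nnf(\neg\psi))$, and $\phi\neq\psi\equiv_{\mathbb{B}}(\nnf(\neg\phi)\wedge\psi)\lor(\phi\wedge\nnf(\neg\psi))$. You are in fact somewhat more careful than the paper's one-line proof, since you isolate the complementation fact $\evaluate{\nnf(\neg\alpha)}{\pi,s}=1-\evaluate{\alpha}{\pi,s}$ and note explicitly that it relies on $\pi$ being model-defining, a hypothesis the paper's proof sketch leaves implicit.
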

The above proposition does not hold for all $K$. In particular, if multiplication is \emph{non-idempotent} on some element $a\in K$ (i.e., $a\neq a \cdot a$), we can express properties in $\FO(=)$ or $\FO(\neq)$ that are not definable in $\FO$.
The same holds true for $\FO(<)$ or $\FO(\leq)$ if multiplication is \emph{expansive} on some $a\in K$ (defined as $a< a \cdot a$). 
\begin{proposition}
    Let $K$ be a semiring. Then
    \begin{itemize}
        \item $\FO<_{K}\FO(=)$ and $\FO<_{K}\FO(\neq)$ if multiplication is non-idempotent on some element of $K$, and
        \item $\FO<_{K}\FO(\leq)$ and $\FO<_{K}\FO(<)$ if multiplication is expansive on some element of $K$.
    \end{itemize}
\end{proposition}
\begin{proof}
    Consider the formula $\psi$ (\Cref{eq:phi}) from \Cref{ex:ferry}.
    We claim that $\psi$ cannot be translated to $\FO$. Suppose for a contradiction that we find $\psi'\in\FO$  such that $\psi\equiv_{K}\psi'$.
   Let $A=\{1,2,3\}$, and let $\tau=\{E\}$ for a binary relation symbol $E$. Consider two interpretations $\pi\colon\lit_{A,\tau}\to K$ and $\pi'\colon\lit_{A,\tau}\to K$ 
   such that for all $x,y\in A$, $\pi(E(x,y))=a$ and $\pi'(E(x,y))=1$, where $a$ is the non-idempotent element in $K$. Then $\evaluate{\psi}{\pi}=1$, because $\evaluate{E(x,y)}{\pi} = a < a \cdot a = \evaluate{E(x,z)\land E(z,y)}{\pi}$, for all $x,y,z\in A$. In contrast, using non-idempotency of $a$,  we obtain $\evaluate{\psi}{\pi'}=0$. 
   Consequently, our hypothesis entails $\evaluate{\psi'}{\pi}=1$ and $\evaluate{\psi'}{\pi'}=0$, so by Proposition~\ref{equivprop}, $\evaluate{\psi'}{\xi_{K}\circ\pi}= 1$ and $\evaluate{\psi'}{\xi_{K}\circ\pi'}= 0$. But since $\xi_{K}\circ\pi=\xi_{K}\circ\pi'$, this is impossible. Thus the claim follows.
   It is straightforward to verify that substituting any of $\neq,=,\geq$ for $<$ in $\psi$ results in the same outcome (specifically $\leq$ needs to swapped). 
   \qed
\end{proof}
As a consequence, we note that the property considered in \Cref{ex:ferry} is not expressible in $\FO$ over $\mathbb{T}$-interpretations,
as the multiplication of the tropical semiring $\mathbb{T}$ is expansive on some element.

\subsection{Encodings of $K$-interpretations}

In order to compare $\FO(O)$ to the machine models we introduced, we need to identify it with a fitting set of functions.
For any $O \subseteq \{=, \neq, \leq, \not\leq\}$, any $\FO(O)$ sentence can essentially be seen as a function from the set of $K$-interpretations to $K$.
To make this fit in with our machine models, we define an encoding for $K$-interpretations, so that the function defined by an $\FO(O)$ sentence can be seen as function from $K^*$ to $K$.

\begin{definition}
    Let $A$ be a set with a strict total order, let $\tau$ be a relational signature, let $\lit_{A, \tau} = \{\ell_1, \dots, \ell_n\}$, let $K$ be a  
    semiring and let $\pi \colon \lit_{A, \tau} \to K$ be a $K$-interpretation.
    Then we define $\enc(\pi)$ to be the concatenation of the values assigned to each literal by $\pi$, i.e., 
    \[
        \enc(\pi) \coloneqq (\pi(\ell_1), \dots, \pi(\ell_n)) \in K^n.
    \]
    For technical reasons, we encode literals of relation symbols $R$ of arity $0$ as if they had arity $1$, i.e, as $\lvert A \rvert$ copies of $\pi(R())$.\label{def:enc}
\end{definition}

Of particular interest is that we can determine $\lvert A \rvert$ from $\lvert \enc(\pi) \rvert$.
With this minor technical change at hand, we can now compute $\lvert A \rvert$ from $\lvert \enc(\pi) \rvert$ and $\tau$.

\begin{lemma}\label{lem:decode_A_size}
    Let $A$ be a strictly ordered set, let $\tau$ be a relational vocabulary, let $K$ be a semiring and $\pi \colon \lit_{A, \tau} \to K$ be a $K$-interpretation.
    We can compute $\lvert A \rvert$ when given $\lvert \enc(\pi) \rvert$ in logarithmic time on a $\BSSK$ machine.
\end{lemma}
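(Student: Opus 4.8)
The plan is to observe that $|\enc(\pi)|$ is a fixed, strictly increasing polynomial in $|A|$ whose coefficients depend only on $\tau$, and to recover $|A|$ by inverting this polynomial via binary search.

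By Definition~\ref{def:enc}, $\lit_{A,\tau}$ contains for every $R \in \tau$ exactly one fact $R(\bar a)$ and one negated fact $\neg R(\bar a)$ for each tuple $\bar a \in A^{\ar(R)}$, and relation symbols of arity $0$ are encoded as if they had arity $1$. Hence, writing $m := |A|$, we have
\[
  |\enc(\pi)| \;=\; p_\tau(m) \;:=\; \sum_{R \in \tau} 2 m^{\max(\ar(R),1)},
\]
a polynomial of degree $k := \max_{R \in \tau}\max(\ar(R),1) \ge 1$ (we may assume $\tau \neq \varnothing$, since otherwise $\lit_{A,\tau} = \varnothing$ and the statement is vacuous) whose coefficients are determined by $\tau$ alone. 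As $\tau$ is not part of the input, the machine may hard-code $p_\tau$. Since every summand $2 m^{\max(\ar(R),1)}$ is strictly increasing in $m$ on $\N$, so is $p_\tau$; therefore $|A|$ is the \emph{unique} natural number $m$ with $p_\tau(m) = |\enc(\pi)|$.

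Given the number $N := |\enc(\pi)|$ in binary over $\{0,1\} \subseteq K$, the machine computes this $m$ by binary search on $[0, N]$: it maintains bounds $\ell \le m \le u$ (initially $\ell = 0$, $u = N$), evaluates $p_\tau$ at the midpoint $m'$, and updates $[\ell,u]$ to $[\ell, m']$ if $p_\tau(m') \ge N$ and to $[m'+1, u]$ otherwise, halting with $\ell = u = |A|$. This uses $\bO(\log N)$ rounds, and each round evaluates the fixed-degree polynomial $p_\tau$ at an argument at most $N$ — a bounded number of additions and multiplications of $\bO(\log N)$-bit numbers — together with one comparison with $N$.

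The one genuine obstacle is that the built-in operations of a $\BSSK$ machine act on elements of $K$ and are of no use for arithmetic on a natural number such as $N$ (take $K = \mathbb{B}$), so the bounds $\ell, u$, the midpoints, and the evaluation of $p_\tau$ must be carried out on binary strings over $\{0,1\} \subseteq K$. All quantities involved have $\bO(\log N)$ bits, so binary addition, multiplication and comparison are realised as elementary subroutines; it is cleanest to write the whole procedure for a $K$-Turing machine performing this bitwise arithmetic and then transfer it to a $\BSSK$ machine via Lemma~\ref{lemma:turing}, the gap normal form of Proposition~\ref{prop:gap} supplying the scratch space for shuttling these strings around and the overhead terms $|x|^2$, $|f(x)|^2$ of Lemma~\ref{lemma:turing} staying small because input, output and work strings all have length $\bO(\log N)$. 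I expect the only delicate point to be the careful accounting of these binary-arithmetic subroutines, keeping both the round count and the per-round work within the logarithmic time budget.
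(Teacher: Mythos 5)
Your proof takes the same route as the paper's: both recover $\lvert A\rvert$ by inverting the strictly increasing polynomial $\lvert\enc(\pi)\rvert = \sum_{R\in\tau} 2\lvert A\rvert^{\max(\ar(R),1)}$ via binary search in $\bO(\log \lvert\enc(\pi)\rvert)$ rounds. The paper's own proof is only a two-line sketch, so your added care about the uniqueness of the root, the degenerate case $\tau=\varnothing$, and the realisation of the arithmetic on a $\BSSK$ machine merely fills in details the paper leaves implicit.
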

\begin{proof}
We can use, e.g., binary search to find the solution for $\lvert A \rvert$ in
\[
    \lvert \enc(\pi) \rvert = \sum_{R \in \tau} \lvert A \rvert^{\max(\ar(R), 1)} \cdot 2
\]
in logarithmic time.
\qed
\end{proof}

\subsection{$\FO$ with Built-in Relations}

To characterize circuit classes logically later on, we need to extend this logic by additional ``built-in'' $K$-relations that are not part of the $K$-interpretation. 
To that end we will slightly extend the syntax of $\FO(O)$ to $\FO(O, F)$ for particular function families $F$.
We essentially want to allow additional $K$-relations that may depend on the size of $A$, but not on $A$ itself. 
We therefore treat ordered sets of the same cardinality as isomorphic to the first $\lvert A \rvert$ natural numbers and thus define the aforementioned function families accordingly.
The set $\arb$ is the set of all function families of the aforementioned kind (cf., e.g.,  \cite{HeribertBuch} for the  analogous definition of built-in relations in the standard Tarski semantics). 

\begin{definition}
Let $K$ be a semiring. 
Then 
\[
    \arb \coloneqq \bigcup_{k \in \N} \{(f_n)_{n \in \N} \mid \text{$f_n \colon \{1, \dots, n\}^k \to K$ for all $n \in \N$}\}.
\]
\end{definition}
We also need to extend the notion of a signature to allow to differentiate between built-in $K$-relations and those given by the input $K$-interpretation. 
To that end, we adopt an auxiliary vocabulary $\sigma_F$ consisting of a relation symbol $P_f$ for each $f=(f_n)_{n \in \N}$ in $F$.

\begin{definition}
    Let $O \subseteq \{=, \neq, \leq, \not\leq\}$, let $\tau$ be a relational vocabulary and let $K$ be a  semiring.
    Then for any $F \subseteq \arbF$ we define the syntax of $\FO(O, F)$ $\tau$-formulae  by extending the Backus--Naur form for $\FO(O)$ formulae by the rule
    $
        \alpha \Coloneqq P_f(\ol{x}) 
    $,
    where $P_f \in \sigma_F$ and $\ol{x}$ is a tuple of variables such that $\lvert \ol{x} \rvert = \ar(P)$.
\end{definition}

The semantics of $\FO(O, F)$ are now defined in the obvious way for ranked (ordered) structures.
\begin{definition}


    Let $O \subseteq \{=, \neq, \leq, \not\leq\}$, let $K$ be a semiring, let $F \subseteq \arbF$, and let $\tau$ be a finite vocabulary. 
      Furthermore let $A$ be a strictly ordered set, $\pi \colon \lit_{A, \tau} \to K$  a $K$-interpretation, and $s$  an assignment.
    Then the semantics of $\FO(O, F)$ extend the semantics for $\FO(O)$ by 
  $$\evaluate{P_f(\ol{x})}{\pi,s} = f_{|A|}(r(s(\ol{x}))),$$
    where $P_f \in \sigma_F$, $\ol{x}$ is a tuple of variables such that $\lvert \ol{x} \rvert = \ar(P)$ and $r \colon A \to \{1, \dots, \lvert A \rvert\}$ is the \emph{ranking function} on $A$, which maps each element of $A$ to its position in the ordering on $A$.

\end{definition}

With this definition at hand, we can finally define the set of functions definable by $\FO(O, F)$ sentences.
Note that we omit $O$ (resp. $F$), if it is empty.

In the following, for $\pi$, we will use $\enc(\pi)$ specified in Definition~\ref{def:enc} whenever we consider problems that use $\pi$ as part of the input.

\funcproblemdef{$\FOK(O,F)\EVAL_\varphi$, $O \subseteq \{=, \neq, \leq, \not\leq\}$,  $F \subseteq \arbF$, semiring $K$}{$K$-interpretation $\pi$}{$\evaluate{\varphi}{\pi}$}

To denote the set of all these function problems, we use the following notation.

\begin{definition}
Let $O \subseteq \{=, \neq, \leq, \not\leq\}$, $K$ be a semiring, and $F \subseteq \arbF$. 
Then
\[
\begin{array}{r@{\,}l}
        \FOK(O, F) &\coloneqq \{\,\FOK(O, F)\EVAL_\varphi\mid \varphi \in \FO(O, F)\,\}.
\end{array}
\]
Again, we omit $O$ (resp. $F$) from the notation if it is empty.
\end{definition}


In the upcoming sections, we establish several connections between the previously introduced models of computation and logic.
It is noteworthy that our results generalize beyond model-defining $K$-interpretations, as defined on page~\pageref{par:model_defining}.

\section{The Complexity of Model Checking for \texorpdfstring{$\FOK$}{FOK}}\label{results}

In the following, we define the model checking problem. 

\decproblemdef{
    $\FOKMC{}$, $O\subseteq\{=,\neq,\leq,\not\leq\}$, semiring $K$
}{
    $\FOK(O)$ formula $\varphi$, $K$-interpretation $\pi$, assignment $s$
}{
    $\evaluate{\varphi}{\pi,s}\neq0$
}
Given a formula $\varphi$,
we write $\FOKMC{\varphi}$ to denote the fragment of 
$\FOKMC{}$ restricted to the fixed formula $\varphi$.

Regarding input values, we assume for $\varphi$, $s$, and $A$ standard polynomial-time computable encodings, e.g., binary encoding. 

The proof of the next result can be found in Appendix~\ref{prf:eval-pspace-algo}.
\begin{theorem}
    Fix a semiring $K$. 
    Given an $\FOK$ formula $\varphi$, a $K$ interpretation $\pi$, and an assignment $s$. 
    The value $\evaluate{\varphi}{\pi,s}$ can be computed in time $\bO(n^2 \cdot |\varphi| \cdot |\pi|^{|\varphi|})$ and space in $\bO(\mathrm{poly}(n))$, with $n = |\varphi|+|\pi|+|s|$ the sum of the encoding lengths.\label{thm:eval-pspace-algo}
\end{theorem}

\begin{corollary}
Let $O \subseteq \{=, \neq, \leq, \not\leq\}$. 
    Every $f\in\FOK(O)$ can be computed in polynomial space.
\end{corollary}
It is easy to check that the above corollary can be extended to functions $f\in\FOK(O,F)$, where  $F \subseteq \arbF$, assuming each  $(f_n)_{n \in \N}\in F$ is computable in polynomial space.

The following two corollaries are obtained via utilisation of Algorithm~\ref{alg:evaluation of varphi} from Appendix~\ref{prf:eval-pspace-algo} and merely checking whether the computed value of $\evaluate{\varphi}{\pi,s}$ is not $0$.
\begin{corollary}
     Let $O\subseteq\{=,\neq,\leq,\not\leq\}$. For any $\phi\in\FO(O)$, $\FOKMC{\varphi} \in \PK$.
\end{corollary}

\begin{corollary}
    Let $O\subseteq\{=,\neq,\leq,\not\leq\}$.
    $\FOKMC{} \in \PSPACEK$.
\end{corollary}

\section{A Circuit Characterisation of \texorpdfstring{$\FOK$}{FOK}}\label{circ}

The following adapts a result from the Boolean setting, originally due to Immerman~\cite{DBLP:journals/siamcomp/Immerman87}, which formalises the intuition that first-order logic and constant-depth circuits correspond to one another.
More recently, this result has been generalised to metafinite logics over the reals~\cite{10.1093/logcom/exae051} and more general integral domains~\cite{BCG24}.
We establish a similar result, moving from integral domains to semirings and replacing logics over metafinite structures with a logic that is evaluated directly in the semiring.

\begin{theorem}\label{thm:fo_aco}
Let $O \subseteq \{=, \neq, \leq, \not\leq\}$ and let $K$ be a semiring. 
Then for $K$-interpretations $\pi \colon \lit_{\tau, A} \to K$, where $A$ is strictly ordered: $\FOK(O, \arb) = \FACO[O]$.
\end{theorem}

\begin{proof}
    This proof follows the same general pattern as a similar result about circuits over the reals and first-order logic over metafinite $\R$-structures \cite[Theorem~30]{10.1093/logcom/exae051}.
    
    The basic idea is for the direction $\FOK(O, \arb) \subseteq \FACO[O]$ to define a circuit family that uses arithmetic to mimic the semantics of a first-order formula $\varphi$, where the input to the circuit family is the encoding of a $K$-interpretation of $\varphi$.
    Existential (resp. universal) quantification can be represented by unbounded addition (resp. multiplication) gates, logical connectives can be simulated by the respective bounded gate types, $K$-relations represent elements of the input, elements of $O$ are given by the extension and (in)equalities on the level of variables can be hardcoded, since we are considering sentences.
    
    For the converse direction, we define a sentence that essentially describes the way a circuit is evaluated, using the additional built-in relations to describe the structure of the circuit.
    We basically define a recursive formula that yields the value of gates at a particular depth in the circuit.
    Since the depth of the circuit family is constant, unrolling the recursion yields a formula that is independent of the particular circuit and only depends on the (constant) depth of its family.

    The full proof can be found in Appendix~\ref{proof:thm_fo_aco}.
    \qed
\end{proof}



\section{Conclusion}


In this paper, we introduced several models of computation to analyze the complexity of problems with respect to semirings.
In particular, we adapted BSS machines and arithmetic circuits for semirings to generalize previously established models for computation with fields or rings.
We then characterized the complexity of the model checking and evaluation problems of first-order logic with semiring semantics using these models.
The work in establishing a complexity theory started here gives rise to an abundance of further research directions.

Continuing from the model checking question, other possible connections between semiring logics and sequential computation merit investigation. 
In particular, the well-known theorem by Fagin, establishing a connection between second-order logic and NP~\cite{fagin1974generalized}, which has been adapted to BSS machines and logics over the real numbers by Grädel and Meer~\cite{DBLP:conf/stoc/GradelM95}, warrants analysis with respect to semirings.

Furthermore, there is much work to be done with regard to arithmetic circuits over semirings. 
The result shown in this paper only pertains to so-called \emph{non-uniform} circuit families, meaning circuit families, where there is no restriction on how computationally difficult it is to obtain any individual circuit.
In general, this can lead to problems solvable by such circuit families, that are not computable with regard to BSS machines.
In order to view a circuit family as an algorithm, a restriction on how hard it is to obtain any given circuit is required.
Given the constructive nature of our proof, there is no doubt that it can be made uniform.
The exact nature of that uniformity still needs to be examined, however.
Additionally, larger circuit classes than $\FACO$ could be characterized, following, e.g., the characterization to the entire $\mathrm{AC}$ and $\mathrm{NC}$ hierarchies over the reals~\cite{BCG24}.
It also remains an open question how the extensions of 
$\FO$ considered in this work relate to logics over metafinite structures, and what their finite-model-theoretic properties are.

\bibliography{ref_url,references}

\begin{thebibliography}{10}

\bibitem{badia}
Guillermo Badia, Manfred Droste, Carles Noguera, and Erik Paul.
\newblock {Logical Characterizations of Weighted Complexity Classes}.
\newblock In Rastislav Kr\'{a}lovi\v{c} and Anton{\'\i}n Ku\v{c}era, editors, {\em 49th International Symposium on Mathematical Foundations of Computer Science (MFCS 2024)}, volume 306 of {\em Leibniz International Proceedings in Informatics (LIPIcs)}, pages 14:1--14:16, Dagstuhl, Germany, 2024. Schloss Dagstuhl -- Leibniz-Zentrum f{\"u}r Informatik.

\bibitem{BCG24}
Timon Barlag, Florian Chudigiewitsch, and Sabrina~Alexandra Gaube.
\newblock Logical characterizations of algebraic circuit classes over integral domains.
\newblock {\em Math. Struct. Comput. Sci.}, 34(5):346--374, 2024.

\bibitem{BarlagHKPV23}
Timon Barlag, Miika Hannula, Juha Kontinen, Nina Pardal, and Jonni Virtema.
\newblock Unified foundations of team semantics via semirings.
\newblock In {\em {KR}}, pages 75--85, 2023.

\bibitem{10.1093/logcom/exae051}
Timon Barlag and Heribert Vollmer.
\newblock A logical characterization of constant-depth circuits over the reals.
\newblock {\em Journal of Logic and Computation}, page exae051, 09 2024.

\bibitem{BiziereGN23}
Clotilde Bizi{\`{e}}re, Erich Gr{\"{a}}del, and Matthias Naaf.
\newblock Locality theorems in semiring semantics.
\newblock In {\em {MFCS}}, volume 272 of {\em LIPIcs}, pages 20:1--20:15. Schloss Dagstuhl - Leibniz-Zentrum f{\"{u}}r Informatik, 2023.

\bibitem{BSSbook}
Lenore Blum, Felipe Cucker, Michael Shub, and Steve Smale.
\newblock {\em Complexity and Real Computation}.
\newblock Springer New York, NY, 1998.

\bibitem{blum1989}
Lenore Blum, Mike Shub, and Steve Smale.
\newblock On a theory of computation and complexity over the real numbers: $np$-completeness, recursive functions and universal machines.
\newblock {\em Bull. Amer. Math. Soc. (N.S.)}, 21(1):1--46, 07 1989.

\bibitem{BrinkeGM24}
Sophie Brinke, Erich Gr{\"{a}}del, and Lovro Mrkonjic.
\newblock Ehrenfeucht-fra{\"{\i}}ss{\'{e}} games in semiring semantics.
\newblock In {\em {CSL}}, volume 288 of {\em LIPIcs}, pages 19:1--19:22. Schloss Dagstuhl - Leibniz-Zentrum f{\"{u}}r Informatik, 2024.

\bibitem{10.1145/3651146}
Marco Calautti, Ester Livshits, Andreas Pieris, and Markus Schneider.
\newblock The complexity of why-provenance for datalog queries.
\newblock {\em Proc. ACM Manag. Data}, 2(2), may 2024.

\bibitem{DannertGNT21}
Katrin~M. Dannert, Erich Gr{\"{a}}del, Matthias Naaf, and Val Tannen.
\newblock Semiring provenance for fixed-point logic.
\newblock In Christel Baier and Jean Goubault{-}Larrecq, editors, {\em 29th {EACSL} Annual Conference on Computer Science Logic, {CSL} 2021, January 25-28, 2021, Ljubljana, Slovenia (Virtual Conference)}, volume 183 of {\em LIPIcs}, pages 17:1--17:22. Schloss Dagstuhl - Leibniz-Zentrum f{\"{u}}r Informatik, 2021.

\bibitem{EiterK23}
Thomas Eiter and Rafael Kiesel.
\newblock Semiring reasoning frameworks in {AI} and their computational complexity.
\newblock {\em J. Artif. Intell. Res.}, 77:207--293, 2023.

\bibitem{eldar_et_al:LIPIcs.ICDT.2024.4}
Idan Eldar, Nofar Carmeli, and Benny Kimelfeld.
\newblock {Direct Access for Answers to Conjunctive Queries with Aggregation}.
\newblock In Graham Cormode and Michael Shekelyan, editors, {\em 27th International Conference on Database Theory (ICDT 2024)}, volume 290 of {\em Leibniz International Proceedings in Informatics (LIPIcs)}, pages 4:1--4:20, Dagstuhl, Germany, 2024. Schloss Dagstuhl -- Leibniz-Zentrum f{\"u}r Informatik.

\bibitem{DBLP:journals/ijfcs/EsparzaLS15}
Javier Esparza, Michael Luttenberger, and Maximilian Schlund.
\newblock {FPSOLVE:} {A} generic solver for fixpoint equations over semirings.
\newblock {\em Int. J. Found. Comput. Sci.}, 26(7):805--826, 2015.

\bibitem{fagin1974generalized}
Ronald Fagin.
\newblock Generalized first-order spectra and polynomial-time recognizable sets.
\newblock {\em Complexity of computation}, 7:43--73, 1974.

\bibitem{ganardi}
Moses Ganardi, Danny Hucke, Daniel K\"{o}nig, and Markus Lohrey.
\newblock {Circuit Evaluation for Finite Semirings}.
\newblock In Heribert Vollmer and Brigitte Vall\'{e}e, editors, {\em 34th Symposium on Theoretical Aspects of Computer Science (STACS 2017)}, volume~66 of {\em Leibniz International Proceedings in Informatics (LIPIcs)}, pages 35:1--35:14, Dagstuhl, Germany, 2017. Schloss Dagstuhl -- Leibniz-Zentrum f{\"u}r Informatik.

\bibitem{DBLP:journals/ijac/GaubertK06}
Stephane Gaubert and Ricardo Katz.
\newblock Reachability problems for products of matrices in semirings.
\newblock {\em Int. J. Algebra Comput.}, 16(3):603--627, 2006.

\bibitem{DBLP:conf/lics/GradelHNW22}
Erich Gr{\"{a}}del, Hayyan Helal, Matthias Naaf, and Richard Wilke.
\newblock Zero-one laws and almost sure valuations of first-order logic in semiring semantics.
\newblock In Christel Baier and Dana Fisman, editors, {\em {LICS} '22: 37th Annual {ACM/IEEE} Symposium on Logic in Computer Science, Haifa, Israel, August 2 - 5, 2022}, pages 41:1--41:12. {ACM}, 2022.

\bibitem{GradelHNW22}
Erich Gr{\"{a}}del, Hayyan Helal, Matthias Naaf, and Richard Wilke.
\newblock Zero-one laws and almost sure valuations of first-order logic in semiring semantics.
\newblock In {\em {LICS}}, pages 41:1--41:12. {ACM}, 2022.

\bibitem{DBLP:conf/stoc/GradelM95}
Erich Gr{\"{a}}del and Klaus Meer.
\newblock Descriptive complexity theory over the real numbers.
\newblock In Frank~Thomson Leighton and Allan Borodin, editors, {\em Proceedings of the Twenty-Seventh Annual {ACM} Symposium on Theory of Computing, 29 May-1 June 1995, Las Vegas, Nevada, {USA}}, pages 315--324. {ACM}, 1995.

\bibitem{GradelM21}
Erich Gr{\"{a}}del and Lovro Mrkonjic.
\newblock Elementary equivalence versus isomorphism in semiring semantics.
\newblock In {\em {ICALP}}, volume 198 of {\em LIPIcs}, pages 133:1--133:20. Schloss Dagstuhl - Leibniz-Zentrum f{\"{u}}r Informatik, 2021.

\bibitem{abs-1712-01980}
Erich Gr{\"{a}}del and Val Tannen.
\newblock Semiring provenance for first-order model checking.
\newblock {\em CoRR}, abs/1712.01980, 2017.

\bibitem{GreenKT07}
Todd~J. Green, Gregory Karvounarakis, and Val Tannen.
\newblock Provenance semirings.
\newblock In Leonid Libkin, editor, {\em Proceedings of the Twenty-Sixth {ACM} {SIGACT-SIGMOD-SIGART} Symposium on Principles of Database Systems, June 11-13, 2007, Beijing, China}, pages 31--40. {ACM}, 2007.

\bibitem{gradel17semiringprovenancefirstordermodel}
Erich Grädel and Val Tannen.
\newblock Semiring provenance for first-order model checking, 2017.

\bibitem{im_et_al:LIPIcs.ICDT.2024.11}
Sungjin Im, Benjamin Moseley, Hung Ngo, and Kirk Pruhs.
\newblock {On the Convergence Rate of Linear Datalog$^\circ$ over Stable Semirings}.
\newblock In Graham Cormode and Michael Shekelyan, editors, {\em 27th International Conference on Database Theory (ICDT 2024)}, volume 290 of {\em Leibniz International Proceedings in Informatics (LIPIcs)}, pages 11:1--11:20, Dagstuhl, Germany, 2024. Schloss Dagstuhl -- Leibniz-Zentrum f{\"u}r Informatik.

\bibitem{DBLP:journals/siamcomp/Immerman87}
Neil Immerman.
\newblock Languages that capture complexity classes.
\newblock {\em {SIAM} J. Comput.}, 16(4):760--778, 1987.

\bibitem{KOSTO}
Peter Kostolányi.
\newblock Weighted automata and logics meet computational complexity.
\newblock {\em Information and Computation}, 301:105213, 2024.

\bibitem{DBLP:journals/soco/LitvinovRSS13}
Grigori~L. Litvinov, A.~Ya. Rodionov, S.~N. Sergeev, and Andrei~N. Sobolevski.
\newblock Universal algorithms for solving the matrix bellman equations over semirings.
\newblock {\em Soft Comput.}, 17(10):1767--1785, 2013.

\bibitem{DBLP:journals/cj/Moller13}
Bernhard M{\"{o}}ller.
\newblock Modal knowledge and game semirings.
\newblock {\em Comput. J.}, 56(1):53--69, 2013.

\bibitem{munozserrano_et_al:LIPIcs.ICDT.2024.12}
Thomas Mu\~{n}oz Serrano, Cristian Riveros, and Stijn Vansummeren.
\newblock {Enumeration and Updates for Conjunctive Linear Algebra Queries Through Expressibility}.
\newblock In Graham Cormode and Michael Shekelyan, editors, {\em 27th International Conference on Database Theory (ICDT 2024)}, volume 290 of {\em Leibniz International Proceedings in Informatics (LIPIcs)}, pages 12:1--12:20, Dagstuhl, Germany, 2024. Schloss Dagstuhl -- Leibniz-Zentrum f{\"u}r Informatik.

\bibitem{DBLP:journals/access/OmanovicOC23}
Amra Omanovic, Polona Oblak, and Tomaz Curk.
\newblock Matrix tri-factorization over the tropical semiring.
\newblock {\em {IEEE} Access}, 11:69022--69032, 2023.

\bibitem{DBLP:journals/tcs/Peeva91}
L.~Peeva.
\newblock Equivalence, reduction and minimization of finite automata over semirings.
\newblock {\em Theor. Comput. Sci.}, 88(2):269--285, 1991.

\bibitem{DBLP:journals/fuin/RudeanuV04}
Sergiu Rudeanu and Dragos Vaida.
\newblock Semirings in operations research and computer science: More algebra.
\newblock {\em Fundam. Informaticae}, 61(1):61--85, 2004.

\bibitem{DBLP:books/daglib/0086373}
Michael Sipser.
\newblock {\em Introduction to the theory of computation}.
\newblock {PWS} Publishing Company, 1997.

\bibitem{Tannen17}
Val Tannen.
\newblock Provenance analysis for {FOL} model checking.
\newblock {\em {ACM} {SIGLOG} News}, 4(1):24--36, 2017.

\bibitem{tencate_et_al:LIPIcs.ICDT.2024.8}
Balder ten Cate, Victor Dalmau, Phokion~G. Kolaitis, and Wei-Lin Wu.
\newblock {When Do Homomorphism Counts Help in Query Algorithms?}
\newblock In Graham Cormode and Michael Shekelyan, editors, {\em 27th International Conference on Database Theory (ICDT 2024)}, volume 290 of {\em Leibniz International Proceedings in Informatics (LIPIcs)}, pages 8:1--8:20, Dagstuhl, Germany, 2024. Schloss Dagstuhl -- Leibniz-Zentrum f{\"u}r Informatik.

\bibitem{HeribertBuch}
Heribert Vollmer.
\newblock {\em Introduction to Circuit Complexity - {A} Uniform Approach}.
\newblock Texts in Theoretical Computer Science. An {EATCS} Series. Springer, 1999.

\bibitem{DBLP:journals/order/Vrana22}
P{\'{e}}ter Vrana.
\newblock A generalization of strassen's theorem on preordered semirings.
\newblock {\em Order}, 39(2):209--228, 2022.

\end{thebibliography}

\newpage
\appendix
\section{BSS$_K$ Machines} \label{appendix}
For the convenience of the reader, we repeat the definition of a $\BSSK$-machine in a more detailed fashion.
\begin{definition}[$\BSSK$ machines] \label{def:BSSK} 
    Let $K$ be a
    semiring. 
    A $\BSSK$ machine consists of an input space $\mathcal{I}=K^*$, a state space $\mathcal{S}=K_*$ and an output space $\mathcal{O}=K^*$, together with a
    directed graph whose nodes are labelled by $1, \ldots, N$. 
    The nodes are of five different types.
    \begin{itemize}
        \item \emph{Input node}. The node labelled by $1$ is the only input node. The node is associated with a next node $\beta(1)$ and the input mapping $g_I\colon \mathcal{I} \to \mathcal{S}$.
        \item \emph{Output node}. The node labelled by $N$ is the only output node. This node is not associated with any next node. 
        Once this node is reached, the computation halts, and the result of the computation is placed on the output space by the output mapping $g_O\colon \mathcal{S}\to \mathcal{O}$.
        \item \emph{Computation nodes.} A computation node $m$ is associated with a next node $\beta(m)$ and a mapping $g_m\colon \mathcal{S}\to \mathcal{S}$ such that for some $c \in K$ and $i,j,k\in \mathbb{Z}$ the mapping $g_m$ is identity on coordinates $l \neq i$ and on coordinate $i$ one of the following holds:
        \begin{itemize}
            \item $g_m(x)_i =x_j+x_k$ (addition),
            \item $g_m(x)_i = x_j\cdot x_k$ (multiplication),
            \item $g_m(x)_i = c$ (constant assignment).
        \end{itemize} 
        \item \emph{Branch nodes.} A branch node $m$ is associated with nodes $\beta^-(m)$ and $\beta^+(m)$. 
        Given $x\in \mathcal{S}$ the next node is $\beta^-(m)$ if $x_1 = x_2$ and $\beta^+(m)$, otherwise. 
        If $K$ is ordered, instead the next node is $\beta^-(m)$ if $x_1 \leq x_2$ and $\beta^+(m)$, otherwise.
        \item \emph{Shift nodes.} A shift node $m$ is associated either with shift left $\sigma_l$ or shift right $\sigma_r$, and a next node $\beta(m)$.
    \end{itemize}
    The input mapping $g_I\colon \mathcal{I} \to \mathcal{S}$ places an
    input $(x_1, \ldots ,x_n)$ in the state
    \[
        (\ldots , 0, \underbrace{1, \dots, 1}_{n}, 0\textbf{.} x_1, \ldots ,x_n, 0,\ldots )\in \mathcal{S},
    \]
    where the size of the input $n$ is encoded in unary in the $n$ first negative coordinates.
    The output mapping $g_O\colon \mathcal{S}\to \mathcal{O}$ maps a state to the string consisting of its first $l$ positive coordinates, where $l$ is the number of consecutive ones stored in the negative coordinates starting from the first negative coordinate.
    For instance,
    \[
        g_O((\ldots ,2,1,1,1,0\textbf{.}x_1, x_2,x_3,x_4,\ldots )) = (x_1, x_2,x_3).
    \]
    A configuration at any moment of computation consists of a node $m\in \{1, \ldots ,N\}$ and a current state $x\in\mathcal{S}$.
    The (sometimes partial) \emph{input-output} function $f_M \colon K^*\to K^*$ of a machine $M$ is now defined in the obvious manner.
    A function $f \colon K^*\to K^*$ is \emph{computable} if $f=f_M$ for some machine $M$. A language $L \subseteq K^*$ is \emph{decided} by a $\BSSK$ machine $M$ if its characteristic function $\chi_L \colon K^* \to K^*$ is $f_M$. 
\end{definition}




Arbitrary computations and comparisons need a way to remember intermediate values.
To this end, we use $(\dots, 0\textbf{.}\, x_1, 1, x_2, 1 \dots, x_n, 1, 0 \dots)$ as a more useful starting state of a $\BSSK$ machine.
The gap between two input values allows the machine to temporarily store values or move a value from one end of the state space to the other.
The following result shows, that obtaining this \emph{gap normal form} can be achieved with polynomial
overhead as an initialization step of a $\BSSK$ machine.


\begin{proposition}\label{prop:gap}
    The initial state space $(\dots, 0, 1, \ldots, 1, 0\textbf{.} x_1, \ldots ,x_n, 0,\ldots )$ can be converted into $(\dots, 0\textbf{.}\, x_1, 1, x_2, 1 \dots, x_n, 1, 0 \dots)$ with quadratic overhead.
    Similarly, the reverse direction can be achieved with quadratic overhead.
\end{proposition}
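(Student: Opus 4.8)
The plan is to give a $\BSSK$ subroutine that sweeps through the input once, using the $n$ copies of~$1$ in the negative coordinates as both a loop counter and a store of separators to be handed out one per pass. I would maintain the invariant that after $k$ passes the tape carries the block $x_1,1,x_2,1,\ldots,x_k,1$, the still-untouched entries $x_{k+1},\ldots,x_n$, and $n-k$ of the ones; for $k=0$ this is the given state (after a one-time rearrangement of at most quadratic cost, if a more convenient placement of the unary block is wanted), and for $k=n$ it is the gap normal form, with no trailing cleanup needed once the block occupies coordinates $1,\ldots,2n$. One pass must locate the coordinate immediately after $x_{k+1}$, shift the suffix $x_{k+2},\ldots,x_n$ one coordinate along, write a~$1$ on the freed coordinate, and delete one~$1$ from the unary block (which at once advances the loop and supplies the separator just written). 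Each action is a short sweep of shift nodes plus one arithmetic node to copy a coordinate onto a neighbour (using a coordinate known to hold $0$ or~$1$ as the second operand), a few constant-assignment nodes, and branch nodes. The length of the suffix to be shifted equals the current size of the unary block; keeping that block adjacent to the point of insertion — which is consistent, since it shrinks by exactly one per pass — lets each pass re-find the insertion point and carry out the shift in $O(n)$ nodes, hence $O(n^2)$ in total. (One device used in several places: wherever a stretch of alternating entries and separators is followed by zeros, its right end can be found by a parity-tracking scan that halts at a separator followed by two zeros, which is possible precisely because every separator equals $1\neq 0$.)

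For the converse direction one runs the symmetric routine: $n$ passes, each locating the next separator — again by a parity-tracking scan, since the separators sit on the even coordinates — deleting it and shifting the following suffix one coordinate back, and prepending one~$1$ to a unary block being grown on the negative side. After $n$ passes the separators are gone, the entries again occupy coordinates $1,\ldots,n$, and the $n$ ones are back in the negative coordinates; each pass is $O(n)$ nodes, so this direction is quadratic as well.

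The step I expect to be the real obstacle is making everything fit into the \emph{fixed, finite} node set required by Definition~\ref{def:BSSK}. The machine cannot hard-code an $n$-dependent number of shifts, so every loop must be driven by data on the tape; but the only tape values it can surely tell apart are $0$ and~$1$, while the inputs $x_i$ range over all of~$K$ and may themselves be $0$ or~$1$. Hence the boundaries between the assembled block, the untouched entries, the trailing zeros, and the unary block cannot in general be recognised from a coordinate's content, and must instead be tracked by the unary counter(s); moreover, to keep the total cost quadratic rather than cubic, such a counter must be kept next to where the work is done, or else consulted only $O(1)$ times per pass. What remains is routine but delicate bookkeeping: checking that a coordinate holding a known value is always available near the head for copies and zero-tests, and that a sweep never overwrites an entry still needed. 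The operations of Definition~\ref{def:BSSK} — addition (equivalently multiplication) for copying, constant assignment for writing $0$ and~$1$, and equality (or $\leq$) branching for comparisons — are exactly what is needed to carry this out.
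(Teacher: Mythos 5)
Your proposal is correct and matches the paper's construction in essence: the paper's proof is simply an explicit $\mathrm{BSS}_K$ machine diagram (the \texttt{Init} subroutine and its inverse in the appendix) that, exactly as you describe, iteratively pairs each $x_i$ with a $1$ drawn from the unary counter over $n$ passes of linear cost each, with the shift and branch nodes arranged so that no comparison is ever made against an input coordinate (since $x_i$ may itself be $0$ or $1$) --- the very obstacle you correctly single out as the delicate point. The remaining bookkeeping you defer is precisely what the paper's figures carry out node by node.
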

\begin{proof}
Figure~\ref{fig:bss normal form machine} implements the subroutine \texttt{Init}, converting the initial configuration into $(\dots, 0\textbf{.}\, x_1, 1, x_2, 1 \dots, x_n, 1, 0 \dots)$ in $\bO(n^2)$ steps.
Analogously, Figure~\ref{fig:bss normal form machine reverse} implements the reverse.
\qed
\end{proof}

\begin{figure}[!h] 
    \centering
    \begin{tikzpicture}[
        bssnode/.style = {draw, rounded corners},
        node distance=.66
        ]
        \node (0) {$(\dots, 0, 1, \dots, 1, 0\textbf{.}\, x_1, \dots, x_n, 0, \dots)$};
        \node[bssnode, below = of 0] (1) {$\sigma_r$};
        \node[bssnode, below = of 1] (2) {$\sigma_r$};
        \node[bssnode, below = of 2] (3) {$\sigma_r$};
        \node[bssnode, below = of 3] (4) {$x_1 = x_2$?}; 
        \node[bssnode, below = of 4, align=left] (5) {$x_2 \gets x_4$\\$x_3 \gets 1$\\$x_4 \gets 0$};
        \node[bssnode, below = of 5] (6) {$\sigma_r$};
        \node[bssnode, below = of 6] (7) {$x_1 = x_2$?}; 
        \node[bssnode, below = of 7, align=left] (8) {$x_2 \gets x_3$\\$x_3 \gets 1$};
        \node[bssnode, below = of 8] (9) {$\sigma_r$};
        \node[bssnode, right = of 8, align=left] (10) {$x_1 \gets x_3$\\$x_3 \gets 0$};
        \node[bssnode, below = of 10] (12) {$\sigma_l$};
        \node[bssnode, below = of 12] (12 1) {$\sigma_l$};
        \node[bssnode, below = of 12 1] (12 2) {$\sigma_l$};
        \node[bssnode, below = of 12 2] (13) {$x_1 = x_2$?}; 
        \node[bssnode, below = of 13] (13 1) {$\sigma_r$};
        
        \node[bssnode, right = 4cm of 5, align=left] (14) {$x_1 \gets x_4$\\$x_3 \gets 1$\\$x_4 \gets 0$};
        
        \node[bssnode, below = of 14] (15) {$\sigma_l$};
        \node[bssnode, below = of 15] (17) {$x_1 = x_2$?}; 
        \node[bssnode, below = of 17] (18) {$\sigma_r$};
        \node[bssnode, below = of 18, align=left] (19) {$x_2 \gets x_1$\\$x_1 \gets 1$};
        \node[bssnode, below = of 19] (20) {$\sigma_r$};
        
        \node[bssnode, right = of 18] (21) {$x_2 \gets 0$};
        \node[bssnode, below = of 21] (22) {$\sigma_l$};
        \node[bssnode, below = of 22] (23) {$\sigma_l$};
        \node[below = of 23] (24) {$(\dots, 0\textbf{.}\, x_1, 1, x_2, 1 \dots, x_n, 1, 0 \dots)$};
        
        \draw[-stealth'] (0) -- (1);
        \draw[-stealth'] (1) -- (2);
        \draw[-stealth'] (2) -- (3);
        \draw[-stealth'] (3) -- (4);
        \draw[-stealth'] (4) -- node[right]{yes} (5);
        \draw[-stealth'] (5) -- (6);
        \draw[-stealth'] (6) -- (7);
        \draw[-stealth'] (7) -- node[right]{yes} (8);
        \draw[-stealth'] (8) -- (9);
        \draw[-stealth'] (9) -- ++(-1.25,0) |- (7);
        \draw[-stealth'] (7) -- node[above]{no} ++(1.5,0) -| (10);
        \draw[-stealth'] (10) -- (12);
        \draw[-stealth'] (12) -- (12 1);
        \draw[-stealth'] (12 1) -- (12 2);
        \draw[-stealth'] (12 2) -- (13);
        \draw[-stealth'] (13) -- node[above]{yes} ++(-1.5,0) |- (12 2);
        \draw[-stealth'] (13) -- node[right]{no} (13 1);
        \draw[-stealth'] (13 1) -- ++ (-4,0) |- (4);
        \draw[-stealth'] (4) -- node[above]{no} ++(1.5,0) -| (14);
        \draw[-stealth'] (14) -- (15);
        \draw[-stealth'] (15) -- (17);
        \draw[-stealth'] (17) -- node[right]{yes} (18);
        \draw[-stealth'] (17) -- node[above]{no} ++(1.5,0) -| (21);
        \draw[-stealth'] (18) -- (19);
        \draw[-stealth'] (19) -- (20);
        \draw[-stealth'] (20) -- ++(-1.5,0) |- (17);
        \draw[-stealth'] (21) -- (22);
        \draw[-stealth'] (22) -- (23);
        \draw[-stealth'] (23) -- (24);
    \end{tikzpicture}
    \caption{\texttt{Init} subroutine. 
    Converts an input of a $\BSSK$ machine into gap normal form. 
    The elements on the right of the dot are always $x_1, x_2, \dots$ during the computation. 
    And $\sigma_l$ (resp. $\sigma_r$) shift the \textbf{state space} to the left (right) with respect to the dot.    
    The intuition of the algorithm is to iteratively pair each $x$ value with a 1.
    Also the shift and compare nodes are placed in such a way that no comparison with the input is made.
    This avoids problems when the input has 0 or 1 values.
    }
    \label{fig:bss normal form machine}
\end{figure}
\begin{figure}
    \centering
    \begin{tikzpicture}[
        bssnode/.style = {draw, rounded corners},
        node distance=.66
        ]
        \node (0) {$(\dots, 0\textbf{.}\, x_1, 1, x_2, 1 \dots, x_n, 1, 0 \dots)$};
        \node[bssnode, below = of 0, align=left] (1) {$x_0 \gets x_1$\\$x_1 \gets 1$};
        \node[bssnode, below = of 1] (2) {$x_1 = x_2$?};
        \node[bssnode, below = of 2] (3) {$\sigma_l$};
        \node[bssnode, below = of 3] (4) {$\sigma_l$};
        \node[bssnode, left = 1cm of 2] (5) {$\sigma_r$};
        \node[bssnode, below = of 5] (6) {$\sigma_r$};
        \node[bssnode, below = of 6] (7) {$\sigma_r$};
        \node[bssnode, below = of 7, align=left] (8) {$x_4 \gets x_1$\\$x_1 \gets 0$};
        \node[bssnode, below = of 8] (9) {$x_1 = x_2$?};
        \node[bssnode, below = of 9] (10) {$\sigma_r$};
        \node[bssnode, right = 1cm of 9] (11) {$\sigma_r$};
        \node[bssnode, below = of 11] (12) {$x_1 = x_2$?};
        \node[bssnode, below = of 12, align=left] (13) {$x_2 \gets x_0$\\$x_0 \gets 0$};
        \node[bssnode, below = of 13] (14) {$\sigma_l$};        
        \node[bssnode, below = of 14] (15) {$\sigma_l$};
        \node[bssnode, below = of 15] (16) {$x_1 = x_2$?};
        \node[bssnode, below = of 16, align=left] (17) {$x_1 \gets x_0$\\$x_0 \gets 1$};
        \node[bssnode, left = 1cm of 16, align=left] (18) {$x_2 \gets x_0$\\$x_1 \gets 0$\\$x_0 \gets 1$};
        \node[bssnode, below = of 18] (19) {$\sigma_r$};
        \node[bssnode, below = of 19] (20) {$\sigma_r$};
        
        \node[bssnode, right = 1.75cm of 12] (21) {$\sigma_l$};
        \node[bssnode, below = of 21] (22) {$\sigma_l$};
        \node[bssnode, below = of 22] (23) {$x_1 = x_2$?};
        \node[bssnode, below = of 23] (24) {$\sigma_l$};
        \node[bssnode, right = 0.75cm of 23] (25) {$\sigma_l$};
        \node[bssnode, below = of 25] (26) {$\sigma_l$};        
        \node[below = of 26] (27) {$(\dots, 0, 1, \dots, 1, 0\textbf{.}\, x_1, \dots, x_n, 0, \dots)$};
        
        \draw[-stealth'] (0) -- (1);
        \draw[-stealth'] (1) -- (2);
        \draw[-stealth'] (2) -- node[right]{yes} (3);
        \draw[-stealth'] (3) -- (4);
        \draw[-stealth'] (4) -- ++(1.25,0) |- (1);
        \draw[-stealth'] (2) -- node[above]{no} (5);
        \draw[-stealth'] (5) -- (6);
        \draw[-stealth'] (6) -- (7);
        \draw[-stealth'] (7) -- (8);
        \draw[-stealth'] (8) -- (9);
        \draw[-stealth'] (9) -- node[above]{no} (11);
        \draw[-stealth'] (11) -- (12);
        \draw[-stealth'] (12) -- node[right]{no} (13);
        \draw[-stealth'] (9) -- node[right]{yes} (10);
        \draw[-stealth'] (10) -- ++(-1.25,0) |- (9);
        \draw[-stealth'] (13) -- (14);
        \draw[-stealth'] (14) -- (15);
        \draw[-stealth'] (15) -- (16);
        \draw[-stealth'] (16) -- node[right]{yes} (17);
        \draw[-stealth'] (16) -- node[above]{no} (18);
        \draw[-stealth'] (17) -- ++(1.25,0) |- (15);
        \draw[-stealth'] (18) -- (19);
        \draw[-stealth'] (19) -- (20);
        \draw[-stealth'] (20) -- ++(-1.25,0) |- (9);
        \draw[-stealth'] (12) -- node[above]{yes} (21);
        \draw[-stealth'] (21) -- (22);
        \draw[-stealth'] (22) -- (23);
        \draw[-stealth'] (23) -- node[right]{yes} (24);
        \draw[-stealth'] (24) -- ++(-1.25,0) |- (23);
        \draw[-stealth'] (23) -- node[above]{no} (25);
        \draw[-stealth'] (25) -- (26);
        \draw[-stealth'] (26) -- (27);
    \end{tikzpicture}
    \caption{Reverse \texttt{Init} subroutine, i.e., \texttt{Init$^{-1}$}.
    Converts the gap normal form into the input/output form of a BBS$_K$ machine. 
    The elements on the right of the dot are always $x_1, x_2, \dots$ during the computation. 
    And $\sigma_l$ (resp. $\sigma_r$) shift the \textbf{state space} to the left (right) with respect to the dot.    
    The intuition of the algorithm is to iteratively decouple $(x_i, 1)$ pairs.
    Also the shift and compare nodes are placed in such a way that no comparison with the $x_i$ values is made.
    This avoids problems when they contain 0 or 1 values.
    }
    \label{fig:bss normal form machine reverse}
\end{figure}

\begin{remark}\label{remark:comparisons}
    Using the gap normal form allows easy simulation of branching nodes using $x_i = x_j$ or $x_i = c$ as branching condition.
\end{remark}

\section{Proof of Theorem~\ref{lemma:turing}}\label{proof:turing}

\newcommand{\hatzero}{\hat{0}}
\newcommand{\hatone}{\hat{1}}
\newcommand{\hata}{\hat{a}}
\newcommand{\hatblank}{\varepsilon}

\begin{restate}[\ref{lemma:turing}]
\begin{theorem}
    Let $K$ be a semiring and
    let $f \colon K^* \to K^*$ be a function
    computed by a $K$-Turing machine $M$
    that runs in time $t$ and space $s$.
    Then there is a $\BSSK$ machine $M'$ computing $f$ and
    $c \in \N$ such that
    for each $x \in K^*$, the machine $M'$ runs on input $x$ in time $c(t(|x|) + |x|^2 + |f(x)|^2 + 1)$
    and in space $c(s(|x|) + 1)$.
\end{theorem}
\end{restate}

\begin{proof}
Let $M = (Q, q_0, R, P, \Gamma, b, K, \delta)$.
We give an informal description of the
simulation of the $K$-Turing machine $M$ using a $\BSSK$ machine.
The simulation proceeds through three consecutive phases.
\begin{enumerate}
\item The input $x \in K^*$ of $M$ is first converted to the gap normal form using
Proposition~\ref{prop:gap}, thus allowing Remark~\ref{remark:comparisons}
to be applied. This intermediate string is then converted to another normal form
that allows the machine $M'$ to simulate the use of the tape alphabet $\Gamma$ along with
the $K$-valued registers of $M$. To this end, the tape of the machine is conceptually divided into blocks
of $2k$ consecutive cells, in effect, allowing the tape
to be used in the form $(K^{2k})_*$, corresponding to shift operations of length $2k$ for
the underlying $\BSSK$ machine.
In total, the required conversions incur a quadratic overhead in
time and a linear overhead in space, based on the length of the input $x$.
\item The computation of the machine $M$ is simulated step by step according to the
transition function $\delta$, using the simulated state space $(K^{2k})_*$. Based on the fixed length
of $2k$ cells of the extended blocks, each simulated computation step of the $K$-TM $M$
incurs a constant overhead in time for the underlying $\BSSK$ machine,
yielding a total running time linear in $t(|x|)$ and space usage linear in $s(|x|)$
during the simulation phase.
\item Once the simulation is completed, the encoded output string corresponding to $f(x)$
in the state space $(K^{2k})_*$ is
first converted to the gap normal form,
and then to the output format of a $\BSSK$ machine by evoking Proposition~\ref{prop:gap}. These two
conversions can be achieved in quadratic time and linear space based on the length of
the output $f(x)$.
\end{enumerate}
All in all, the use of time and the required amount of space satisfy the conditions stated
in the claim.
In the following, we consider some of the details.

Once the input is converted to the gap normal form in Phase 1,
the machine can be thought of as if it was
using an alphabet of the form $\{\hata \mid a \in K\} \cup \{\hatblank\}$, where
for each element $a \in K$, the symbol $\hata$ encodes the semiring value $a$ in a block of
two consecutive cells of the tape in the form $a1$, and furthermore, where
$\hatblank$ carries the meaning of the blank symbol of the alphabet and is
encoded with two consecutive $0$ elements, i.e., the string $00$.
As in Remark~\ref{remark:comparisons}, the underlying $\BSSK$ machine
can simulate an extended $\BSSK$ machine using the alphabet
$\{\hata \mid a \in K\} \cup \{\hatblank\}$ and a conceptual state space $(K^2)_*$
in such a manner that the simulated machine is capable of computing the arithmetic operations of the semiring,
comparing elements on the tape with constant values of the semiring,
as well as comparing non-consecutive elements at fixed coordinates of the tape with each other.
Each of these operations can be implemented using a constant number of computation steps
on the underlying $\BSSK$ machine.

In particular, the gap normal form permits the use of symbols $\hatzero$ and $\hatone$
for binary encoding.
We exploit this fact in order to convert the set $\Gamma$ of tape alphabet symbols
into strings of a fixed length $l$ using the alphabet
$\{\hata \mid a \in K\} \cup \{\hatblank\}$.
These strings in turn correspond to strings of length $2l$
in the state space of the underlying $\BSSK$ machine.
In addition to the symbols of the set $\Gamma$, the encoding that is used in Phase 2
allows the elements of $K$ to be used as tape symbols and
simulates the use of registers corresponding to the set $R$.
Overall, this results in an encoding using strings of elements of $K$ with a fixed
length $2k$.

Next, we give an exact definition for this encoding.
Let, e.g., $l = \lceil\log_2(|\Gamma|)\rceil$, $k = l + |R| + 1$, and let
$e \colon \Gamma \setminus \{b\} \to \{\hatzero, \hatone\}^l$ be an injection.
Each element of the set $\Gamma \cup K$ is matched with a (possibly infinite) set
of length $k$ strings of the alphabet
$A \coloneqq \{\hata \mid a \in K\} \cup \{\hatblank\}$ as follows:
\begin{itemize}
\item The blank symbol $b \in \Gamma$ corresponds to every string $\hatblank s$ with $s \in A^{k - 1}$.
\item Each non-blank tape symbol $a \in \Gamma \setminus \{b\}$ corresponds to the concatenation $\hatzero e(a) s$
for every $s \in A^{|R|}$.
\item Each semiring element $a \in K$ corresponds to every $\hatone \hata s$, where $s \in A ^ {k - 2}$.
\end{itemize}
In particular, the previous three types of encoded symbols can be distinguished by the first $A$-element of each sequence.

The last $|R|$ elements of these strings are used to carry the $K$-values of
the registers in such a manner that whenever the extended head of the machine
in the simulated state space $(A^k)_*$ (or, more generally, $((K^2)^k)_*$) is shifted left or right, corresponding to a
shift of length $2k$ for the underlying $\BSSK$ machine, the elements stored in the
simulated registers are copied to their respective places in the new position of the simulated head.
In essence, the values in the registers are carried along as the head of the machine is moved.

Since the length of the extended cells of the state space $(A^k)_*$ is fixed,
each operation type of the transition function $\delta$ can be simulated using a fixed
number of computation steps of the underlying $\BSSK$ machine.
The states in the finite set $Q$ of the $K$-Turing machine $M$ are kept track of using
the nodes that are labelled by $1, \dots, N$ in Definition \ref{def:BSSK}.

As the last part of the proof, we explain in short how to implement the remaining
conversions in Phases 1 and 3.
In order to convert a string that is in the gap normal form to the encoding used in the simulation,
we repeatedly apply the procedure of replacing the rightmost $\hatzero$ or $\hatone$
by $\hatblank$ and moving the replaced symbol step by step to be the leftmost element corresponding to
the string of the simulation alphabet. During the process,
the remaining part of the string of symbols in the alphabet $\{\hata \mid a \in K\}$
and the converted part are separated using
the string $\hatblank^k$.
During the conversion, the extended blank symbols in the string $\hatblank^k$ can be recognized by
both sides of the simulation, thus allowing
this string to be used as a separator.
The conversion back to the gap normal form in Phase 3 can be implemented in a similar manner.
Both of these conversions can be accomplished in quadratic time and linear space
based on the length of the string that is to be converted.
\qed
\end{proof}

\section{Proof of Theorem~\ref{thm:fo_aco}}

In the sequel, when characterizing circuit classes logically, it will be necessary, to be able to refer to properties of individual gates.
For this reason, we follow the type associations defined in Tab.~\ref{table:gate_types}.

\begin{table}
\centering
    \begin{tabular}{cccccccccc}
        \toprule
        $g$ & input & constant & $+$ & $\cdot$ & output & $=$ & $\neq$ & $\leq$ & $\not\leq$\\
        \midrule
        type & 1 & 2 & 3 & 4 & 5 & 6 & 7 & 8 & 9\\
        \bottomrule
    \end{tabular}
    \caption{Type associations.}\label{table:gate_types}
\end{table}

\begin{restate}[\ref{thm:fo_aco}]
\begin{theorem}
    Let $O \subseteq \{=, \neq, \leq, \not\leq\}$ and let $K$ be a semiring. 
    Then for $K$-interpretations $\pi \colon \lit_{\tau, A} \to K$, where $A$ is strictly ordered: $\FOK(O, \arb) = \FACO[O]$.
\end{theorem}
\end{restate}
\begin{proof}\label{proof:thm_fo_aco}
$\FOK(O, \arb) \subseteq \FACO[O]$:

Given a $\FO(O, \arb)$-sentence $\varphi$, the idea is to construct a circuit family that computes the function $\FOK(O, \arb)\EVAL_\varphi$.
This is achieved by structural induction on $\varphi$. 
For any $n \in \N$, we essentially built the $n$th circuit ``top-down'', i.e., starting at the output gate and continuing towards the input gates.
While doing so, we maintain the invariant at each step in the induction, that if the predecessor gates of the one we are currently constructing compute the same function as the subformulae which they will represent, then our circuit as a whole computes $f$.

Let $\varphi \in \FO(O, \arb)$ over the signature $\tau$.
For any length $n$ of valid encoded $K$-interpretations $\pi$ for $\tau$, we are going to define a circuit $C_n$ such that $f_{C_n}(\enc(\pi)) = \FOK(O, \arb)\EVAL_\varphi(\pi)$.
If $\varphi$ contains $k$ variables $x_1, \dots, x_k$, we will do this by for each subformula $\psi$ of $\varphi$ and any $(m_1, \dots, m_k) \in A^k$ creating a circuit $C_n^{\psi(m_1, \dots, m_k)}$ such that for any $K$-interpretation $\pi$, where $\lvert \enc(\pi) \rvert = n$, it holds that $\evaluate{\psi[m_1/x_1, \dots, m_k/x_k]}{\pi} = f_{C_n^{\psi(m_1, \dots, m_k)}}(\enc(\pi))$, where $\psi[m_1/x_1, \dots, m_k/x_k]$ is the formula $\psi$, where each occurrence of $x_i$ is replaced by $m_i$ for all $1 \leq i \leq k$.
The $m_1, \dots, m_k$ are essentially used to keep track of the values ``selected'' by the quantifiers and we initialize them to be $0$.

We proceed by structural induction on $\varphi$.
W.l.o.g. let $\varphi$ have exactly $k$ variables.

At the very top is the output gate, so $C_n^{\varphi(0, \dots, 0)}$ consists of the output gate in addition to the circuit as described by the following induction.

For any subformula $\psi$ of $\varphi$, we proceed as follows.
\begin{enumerate}
    \item Let $\psi = \exists x_i \xi$.
    Then $C_n^{\psi(m_1, \dots, m_k)}$ consists of an addition gate with the predecessors $C_n^{\xi(m_1, \dots, m_{i-1}, a, m_{i+1}, \dots,  m_k)}$ for all $a \in A$.
    \item Let $\psi = \forall x_i \xi$.
    Then $C_n^{\psi(m_1, \dots, m_k)}$ is defined as above, except that it has a multiplication gate at the top.
    \item Let $\psi = \xi_1 \lor \xi_2$.
    Then $C_n^{\psi(m_1, \dots, m_k)}$ consists of a sum gate with the predecessors $C_n^{\xi_1(m_1, \dots, m_k)}$ and $C_n^{\xi_2(m_1, \dots, m_k)}$.
    \item Let $\psi = \xi_1 \land \xi_2$.
    Then $C_n^{\psi(m_1, \dots, m_k)}$ is defined as above, except that it has a multiplication gate at the top.
    \item Let $\psi = \xi_i \circ \xi_j$ for $\circ \in O$.
    Then $C_n^{\psi(m_1, \dots, m_k)}$ is defined as above, except that it has a $\circ$ gate at the top.
    \item Let $\psi = x_i \star x_j$ for $\star \in \{=, \neq\}$ and variables $x_i, x_j$.
    Then $C_n^{\psi(m_1, \dots, m_k)}$ consists of a constant $1$ gate if $m_i = m_j$ and a constant $0$ gate, otherwise.
    \item Let $\psi = R(\ol{y})$ for $R \in \tau$.
    Then $C_n^{\psi(m_1, \dots, m_k)}$ is the input gate representing the literal $R(\ol{y})[m_1/x_1, \dots, m_k/x_k]$ in $\enc(\pi)$.
    \item Let $\psi = \neg R(\ol{y})$ for $R \in \tau$.
    Then $C_n^{\psi(m_1, \dots, m_k)}$ is the input gate representing the literal $\neg R(\ol{y})[m_1/x_1, \dots, m_k/x_k]$ in $\enc(\pi)$.    
    \item Let $\psi = P_f(\ol{y})$ for $f \in \arb$.
    Then $C_n^{\psi(m_1, \dots, m_k)}$ is a constant gate that has the value $f(\ol{y}[m_1/x_1, \dots, m_k/x_k])$, where $\ol{y}[m_1/x_1, \dots, m_k/x_k]$ is the tuple of variables $\ol{y}$, where $x_i$ is replaced by $m_i$ for $1 \leq i \leq k$.
\end{enumerate}

This construction ensures that the function defined by each subformula of $\varphi$ is exactly the one of the respective subcircuit and thus the circuit $C_n^{\varphi(0, \dots, 0)}$ computes exactly the function $\evaluate{\varphi}{\pi}$.
Therefore, for each $\FO(O, \arb)$-sentence $\phi$, we have that $\FOK(O, \arb)\EVAL_\phi \in \FACO(O)$ and thus $\FOK(O, \arb) \subseteq \FACO(O)$.

$\FACO[O] \subseteq \FOK(O, \arb)$:

Given a $\FACO(O)$ family $\calC = (C_n)_{n \in \N}$, the idea is to define a single sentence $\varphi$, such that $\evaluate{\varphi}{\pi} = f_\calC(\enc(\pi))$.
The sentence $\varphi$ essentially describes how circuits in the family $\calC$ are evaluated.
It does this by making use of the $\arb$ extension to $\FO$. 
Since we have access to arbitrary functions that may depend on the size of the input $K$-interpretation, the interpretation of these functions can be chosen according to the number of input gates of the circuit. 
This way, $\varphi$ will describe the entire circuit family.
We will use these built-in functions to describe the gates of our circuit. 
In particular, they will give us information about gate types, edges, constant values and indices of input gates.

Let $C_n \in \calC$ and let $q \in \N$ such that $\size(C_n) \leq n^q$.
As per Lemma~\ref{lem:circ_trees}, we can assume that each gate in $C_n$ has fan-out $1$ and that for each gate $g$, each input-$g$ path has the same length.
This essentially gives all of the circuits of $\calC$ a layered form, such that one can talk in an unambiguous way about the depth of any individual gate, in the sense that it is the distance to an input gate.

Additionally, the fact that $\size(C_n) \leq n^q$ allows us to encode each gate in $C_n$ as a $q$ long tuple of values in $\{1, \dots, n\}$.
This will enable us to effectively talk about the structure of $C_n$ logically.

The sentence $\varphi$ will have the signature $\{R^1\}$. 
From the $\arb$ extension, for all $n \in \N$, we also make use of the functions $t_n \colon \{1, \dots, n\}^q \to \{0, 1\}^4$, $c_n \colon \{1, \dots, n\}^q \to K$, $\textit{in}_n \colon \{1, \dots, n\}^{q+1} \to \{0, 1\}$, $e_n \colon \{1, \dots, n\}^{2 \cdot q} \to \{0, 1\}$ and $\textit{left} \colon \{1, \dots, n\}^{2 \cdot q} \to \{0, 1\}$, where $t_n(x_1, \dots, x_q)$ yields the gate type in binary of the gate encoded by $(x_1, \dots, x_q)$ as per Table~\ref{table:gate_types} on page~\pageref{table:gate_types}, $c_n(x_1, \dots, x_q)$ returns the value of the gate encoded by $(x_1, \dots, x_q)$ if it is a constant gate and $0$, if it is not, $\textit{in}_n(x_1, \dots, x_q, y)$ yields $1$, if the gate encoded by $(x_1, \dots, x_q)$ is the $y$th input gate and $0$, if it is not and $e_n(x_1, \dots, x_q, y_1, \dots, y_q)$ returns $1$, if there is an edge from the gate encoded by $(x_1, \dots, x_q)$ to the one encoded by $(y_1, \dots, y_q)$ and $0$, if there is not.
The additional function $\textit{left}$ is only needed if $\leq \in O$ or $< \in O$, and $\textit{left}(x_1, \dots, x_q, y_1, \dots, y_q)$ returns $1$, if the gate encoded by $(x_1, \dots, x_q)$ is the left predecessor of the gate encoded by $(y_1, \dots, y_q)$ and the gate encoded by $(y_1, \dots, y_q)$ is a $<$ gate.
Otherwise, $\textit{left}$ returns $0$.
It is worth to note that $R$, as the unary only relation symbol in $\tau$, yields the individual elements of the input to the circuit.
We make use of that fact when we characterize the input gates logically.

With all that at hand, we will now define $\varphi$ by induction on the layers of the circuit, i.e., we start at depth $0$ and move towards the output gate.
We will do this by defining $\varphi_d$ for each $0 \leq d \leq \depth(C_n)$.

At depth $0$, we have only input gates. 
Therefore $\varphi_0(x_1, \dots, x_q) \coloneqq \\ \exists y~ P_\textit{in}(x_1, \dots, x_q, y) \cdot R(y)$.

For $1 \leq d \leq \depth(C_n)$, we define $\varphi_d$ as follows:

\begin{align*}
    \varphi_d(x_1, \dots, x_q) \coloneqq~& P_t(x_1, \dots, x_q) = (0010) \cdot T_{2, d}(x_1, \dots, x_q) + \\
    & P_t(x_1, \dots, x_q) = (0011) \cdot T_{3, d}(x_1, \dots, x_q) + \\
    & P_t(x_1, \dots, x_q) = (0100) \cdot T_{4, d}(x_1, \dots, x_q) + \\
    & P_t(x_1, \dots, x_q) = (0101) \cdot T_{5, d}(x_1, \dots, x_q) + \\
    & P_t(x_1, \dots, x_q) = (0110) \cdot T_{6, d}(x_1, \dots, x_q) + \\
    & P_t(x_1, \dots, x_q) = (0111) \cdot T_{7, d}(x_1, \dots, x_q) + \\
    & P_t(x_1, \dots, x_q) = (1000) \cdot T_{8, d}(x_1, \dots, x_q) + \\
    & P_t(x_1, \dots, x_q) = (1001) \cdot T_{9, d}(x_1, \dots, x_q),
\end{align*}
where 
\begin{align*}
    T_{2, d}(x_1, \dots, x_q) =~& P_c(x_1, \dots, x_q) \\
    T_{3, d}(x_1, \dots, x_q) =~& \exists y_1, \dots, y_q~ P_e(y_1, \dots, y_q, x_1, \dots, x_q) \land \varphi_{d-1}(y_1, \dots, y_q) \\
    T_{4, d}(x_1, \dots, x_q) =~& \forall y_1, \dots, y_q~ P_e(y_1, \dots, y_q, x_1, \dots, x_q) \land \varphi_{d-1}(y_1, \dots, y_q) \\
    T_{5, d}(x_1, \dots, x_q) =~& \exists y_1, \dots, y_q~ P_e(y_1, \dots, y_q, x_1, \dots, x_q) \land \varphi_{d-1}(y_1, \dots, y_q) \\
    T_{6, d}(x_1, \dots, x_q) =~& \exists y_1, \dots, y_q, z_1, \dots, z_q~ \\
    & P_e(y_1, \dots, y_q, x_1, \dots, x_q) \land P_e(z_1, \dots, z_q, x_1, \dots, x_q) \land \\
    & \left(\bigvee_{1\leq i \leq q} y_i \neq z_i \right) \land \varphi_{d-1}(y_1, \dots, y_q) = \varphi_{d-1}(z_1, \dots, z_q) \\
    T_{7, d}(x_1, \dots, x_q) =~& \exists y_1, \dots, y_q, z_1, \dots, z_q~ \\
    & P_e(y_1, \dots, y_q, x_1, \dots, x_q) \land P_e(z_1, \dots, z_q, x_1, \dots, x_q) \land \\
    & \left(\bigvee_{1\leq i \leq q} y_i \neq z_i \right) \land \varphi_{d-1}(y_1, \dots, y_q) \neq \varphi_{d-1}(z_1, \dots, z_q) \\
    T_{8, d}(x_1, \dots, x_q) =~& \exists y_1, \dots, y_q, z_1, \dots, z_q~ \\
    & P_e(y_1, \dots, y_q, x_1, \dots, x_q) \land P_e(z_1, \dots, z_q, x_1, \dots, x_q) \land \\
    & P_\textit{left}(y_1, \dots, y_q, x_1, \dots, x_q) \land \varphi_{d-1}(y_1, \dots, y_q) \leq \varphi_{d-1}(z_1, \dots, z_q) \\
    T_{9, d}(x_1, \dots, x_q) =~& \exists y_1, \dots, y_q, z_1, \dots, z_q~ \\
    & P_e(y_1, \dots, y_q, x_1, \dots, x_q) \land P_e(z_1, \dots, z_q, x_1, \dots, x_q) \land \\
    & P_\textit{left}(y_1, \dots, y_q, x_1, \dots, x_q) \land \varphi_{d-1}(y_1, \dots, y_q) \not\leq \varphi_{d-1}(z_1, \dots, z_q). 
\end{align*}
Now, it holds for the formula 
\[
    \varphi \coloneqq \exists x_1, \dots, x_q~ P_t(x_1, \dots, x_q) = (0110) \land \varphi_{\depth(C_n)}(x_1, \dots, x_q)
\]
that $\evaluate{\varphi}{\pi} = f_{C_n}(\enc(\pi))$.
Thus, for each $\FACO(O)$-circuit family, there exists a sentence $\varphi$ such that $\FOK(O, \arb)\EVAL_\varphi = f_{C_n}$.
Therefore, $\FOK(O,\arb) \subseteq \FACO(O)$ and putting it all together $\FOK(O,\arb) = \FACO(O)$.
\qed
\end{proof}

\section{Proof of Theorem~\ref{thm:eval-pspace-algo}}\label{prf:eval-pspace-algo}
\begin{restate}[\ref{thm:eval-pspace-algo}]
\begin{theorem}
    Fix a semiring $K$. 
    Given an $\FOK$ formula $\varphi$, a $K$ interpretation $\pi$, and an assignment $s$. 
    The value $\evaluate{\varphi}{\pi,s}$ can be computed in time $\bO(n^2 \cdot |\varphi| \cdot |\pi|^{|\varphi|})$ and space in $\bO(\mathrm{poly}(n))$, with $n = |\varphi|+|\pi|+|s|$ the sum of the encoding lengths.
\end{theorem}
\end{restate}
\begin{algorithm}[t]
    \DontPrintSemicolon
    \SetKwInOut{Data}{Data}
    \SetKwProg{Proc}{Procedure}{}{}
    \KwData{set $A$, $K$-interpretation $\pi$}
    \Proc{\procEval(formula $\varphi$, assignment $s$)}{
    \Switch(\tcp*[f]{time complexity}){$\varphi$}{
        \lCase(\tcp*[f]{$\bO(n)$}){$x = y$}{\Return 1 iff $s(x) = s(y)$}
        \lCase(\tcp*[f]{$\bO(n)$}){$x \neq y$}{\Return 1 iff $s(x) \neq s(y)$}
        \lCase(\tcp*[f]{$\bO(n^2)$}){$ R(\bar x)$}{\Return $\pi(R(s(\bar x)))$}
        \lCase(\tcp*[f]{$\bO(n^2)$}){$\lnot R(\bar x)$}{\Return $\pi(\lnot R(s(\bar x)))$}
        \lCase(\tcp*[f]{$\bO(n) + t(\psi) + t(\theta)$}){$\psi \land \theta$}{\Return $\procEval(\psi, s) \cdot \procEval(\theta, s)$}
        \lCase(\tcp*[f]{$\bO(n) + t(\psi) + t(\theta)$}){$\psi \lor \theta$}{\Return $\procEval(\psi, s) + \procEval(\theta, s)$}
        \lCase(\tcp*[f]{$\bO(n) + t(\psi) + t(\theta)$}){$\psi \star \theta$}{\Return 1 iff $\procEval(\psi, s) \star \procEval(\theta, s)$}
        \lCase(\tcp*[f]{$|A| \cdot \bO(n)\cdot t(\psi)$}){$\exists x \psi$}{\Return $\sum_{a \in A}\procEval(\psi, s[a/x])$}
        \lCase(\tcp*[f]{$|A| \cdot \bO(n)\cdot t(\psi)$}){$\forall x \psi$}{\Return $\prod_{a \in A}\procEval(\psi, s[a/x])$} 
    }
    }
\caption{Evaluation of $\evaluate{\varphi}{\pi, s}$, where $\star\in O\subseteq\{=,\neq,\leq,\not\leq\}$ }
\label{alg:evaluation of varphi}
\end{algorithm}
\begin{proof}
The procedure $\procEval$ in Algorithm~\ref{alg:evaluation of varphi} is a recursive algorithm that runs on a $K$-TM to solve the model checking problem. 
Set $A$ and $\pi$ are used as ``global variables'' as they are never modified in the recursive steps. 
They are accessible by every recursive algorithmic call. 
The correctness follows inductively by semantics (Def.~\ref{interpretation}). 

Now let $n$ be the input length, i.e., $n=|\varphi|+|\pi|+|s|$. 
Notice that the universe $A$ can be explicitly extracted from $\pi$ in time $O(|\pi|^2)$. 
We can see this as a preprocessing step before the execution of Algorithm~\ref{alg:evaluation of varphi}. 
As a result, in the following, we will use $A$ as if it is an input. 

To measure the space used by the machine, we need to prove an upper bound on the recursion depth and the space used in a recursive step. 
For every conjunction and disjunction there are two recursive steps. 
For every quantifier there are $|A|$-many recursive steps. 
So altogether, we can bound the number of recursive steps as follows for a given formula $\varphi$:
\[
    (2\cdot(\#_\land(\varphi)+\#_\lor(\varphi))+1) \cdot |A|^{(\#_\exists(\varphi)+\#_\forall(\varphi))} \in \bO(|\varphi|\cdot|A|^{|\varphi|}),
\]
where $\#_O(\varphi)$, for $O\in\{\land,\lor,\exists,\forall\}$, is the number of occurrences of $O$ in $\varphi$. 
Now, we turn towards the space and time bound of a single recursive step. 
We do a case distinction according to the \textbf{switch}-expression in the algorithm.
\begin{description}
    \item[$x=y$ / $x\neq y$:] Use separate registers in $R$ to copy and check if such an expression is true. This needs constant space and linear time in $n$.
    \item[$R(\bar x)$ / $\lnot R(\bar x)$:] Copy the values of $s(\bar x)$ to the end of the tape and return the specified value according to $\pi$. For that purpose, we need additional markings to ``remember'' which positions have been compared. Altogether this can be done in quadratic time in $n$ and linear space in $n$.
    \item[$\wedge$ / $\vee$:] Here, we need to copy the respective parts from the input yielding $\bO(n)$ time and space.
    \item[$\star$:] Analogously as for the previous case.
    \item[$\exists$ / $\forall$:] 
    Again, we essentially need to copy parts from the input and patch the assignment. 
    Regarding the $A$-values we need to iterate through this part of the input yielding $\bO(n)$ time and space.
\end{description}
We see that the space of each step is bounded by $\bO(n)$. 
Regarding time complexity, 
the time needed at each step is in $\bO(n^2)$ and the number of recursive steps was bounded by $\bO(|\varphi| \cdot |A|^{\varphi})$
yielding a time bound of $\bO(n^2 \cdot |\varphi| \cdot |A|^{|\varphi|})$.\qed
\end{proof}
\end{document}